\newcolumntype{m}[1]{>{\centering\arraybackslash}p{#1}}
\newtheorem{observation}{Observation}
\newtheorem{lemma}{Lemma}
\newtheorem{theorem}{Theorem}
\newtheorem{example}{Example}
\newtheorem{proposition}{Proposition}
\newif\ifshow 
\begin{document}

\title{\bf Strategyproof Maximum Matching under Dichotomous Agent Preferences
	\thanks{The authors thank Florian Brandl, Lars Ehlers, Sean Horan, Pinaki Mandal, Vikram Manjunath, Shivika Narang, Adrian Vetta, and audience members at the 2023 Interdisciplinary CIREQ Workshop in Computer Science and Economics for their comments.}}

\author{Haris Aziz\thanks{UNSW Sydney, Sydney, Australia; e-mail:haris.aziz@unsw.edu.au} \, \,
Md. Shahidul Islam\thanks{Concordia University, Montreal, Canada; e-mail:  md-shahidul.islam@mail.concordia.ca} \, \,
Szilvia P\'{a}pai\thanks{Concordia University, Montreal, Canada; e-mail: szilvia.papai@concordia.ca} }

\date{}


\maketitle

\begin{abstract}

\linespread{1.3}\selectfont

We consider a two-sided matching problem  in which the agents on one side have dichotomous preferences and the other side representing institutions has strict preferences (or priorities).  It captures several important applications  in matching market design in which the agents are only interested in getting matched to an acceptable institution. These include centralized daycare assignment and healthcare rationing. We present a compelling new mechanism that satisfies many prominent and desirable properties including individual rationality, maximum size, fairness, Pareto-efficiency on both sides, strategyproofness on both sides, non-bossiness and having polynomial time running time. As a result, we answer an open problem whether  there exists a mechanism that is agent-strategyproof, maximum, fair and non-bossy.

\medskip

\noindent \textbf{Keywords:} matching; dichotomous preferences; priorities; maximum matching; fairness, strategyproofness 

\medskip

\noindent \textbf{JEL classification:} C78, D47, D78
 
\end{abstract}

\section{Introduction}
\label{sec:introduction}

%
%
%
%



We consider a fundamental matching problem in which agents are to be matched to institutions. 
Agents express dichotomous preferences over institutions by specifying acceptable institutions. Institutions express preferences  among the set of agents but cannot declare any agent unacceptable. There is a shortage of the available spots at institutions, and agents care primarily about being assigned to an acceptable object rather than which institution they are assigned to. The system typically requires that agents are matched to institutions in a way that is fair with respect to the preferences of both sides. At the same time, the designer is concerned about assigning as many agents to acceptable spots as possible.  

The problem that we consider is inspired by the centralized matching of children to daycare centres or schools. In these problems, parents of children express a subset of daycares as acceptability if they are near enough or provide the required facilities~(see, e.g., \citep{STM+23a}). On the other hand, each daycare may have its own preference ordering over children according to its own criteria.  Such a problem also arise with an increasing number of applicants to various institutions, such as when there are too few school seats compared to the number of school-age children, or when there is a shortage of daycare capacity compared to the demand for daycare. Accommodating applicants in schools, daycares, and similar institutions is a frequent concern in many places worldwide, where applicants may stay on waiting lists for a long time. 
Depending on the application, the spots at the institution could be daycare spots,  immigration slots, school seats, or healthcare treatments. In particular, our setting  captures healthcare rationing problems in which healthcare patients are matched to limited healthcare resources. These problems have received tremendous interest in recent years~(see,. e.g., \citep{,PSU+23a,AzBr24a}).  


In the centralized setting that we consider, our main goals is to maximize the number of placements of agents to acceptable slots and to do this in a fair and strategyproof manner.  
There are many decentralized daycare systems with a substantial shortage of daycare openings.  There is often a major concern that many children are on the waiting lists of daycares for an extended period of time, and thus parents face difficulties when returning to work after parental leave. Surprisingly, there are still vacant spots remaining in daycares which implies inefficient usage of resources. However, matching the maximum number of applicants is unlikely under a decentralized allocation system. According to \citet{ChKo16a}, decentralized matching mechanisms lead to unfairness and inefficiency. These might be unexpected consequences if applicants remain unassigned due to problems in the allocation system. Matching as many children as possible to daycares, therefore, is the primary objective. Secondarily, we also want to ensure that the allocation is fair, which means that applicants’ preferences are respected. Finally, it is also important to use an allocation mechanism that provides the correct incentives when applicants report their preferences, in order to make sure that the allocation is indeed efficient and maximizes the number of matched applicants, and also respects the preferences over applicants. Many of the above concerns are captured by key axioms in market design that we discuss below.

When finding a desirable matching, we are guided by basic axiomatic properties that are well-established in the matching literature. 
A basic requirement is to find an \textit{individually rational} matching in which each agent is matched to an acceptable institution and each institution is matched to an acceptable agent. A matching satisfies \textit{fairness} (respects the preferences/priorities of institutions) if there is no unmatched agent $i$ who wishes to be matched to an acceptable institution $d$ and $d$ is matched to an agent $j$ that is ranked lower than $i$ by $d$.  
We want to find a maximum size individually rational matching that satisfies respect for priorities. 
Interpreted in our setting, a result of \citet{AzBr21a} implies  that a  maximum size (individually rational) matching that respects priorities can be computed in polynomial time. However, their rule still has some limitations.  It is not strategyproof in the classical sense for either of the two sides. It is also not necessarily Pareto-efficient from the institutions' side. Whether there exists a mechanism that is agent-strategyproof, maximum, fair and non-bossy was posed as an unresolved problem by \citet{AzBr21a}.  

In this paper we explore the following fundamental question:

\begin{quote}
	\emph{Can maximum size, efficiency, fairness and strategyproofness be satisfied simultaneously in a two-sided matching model with dichotomous preferences on one side and strict preferences (or priorities) on the other side? }
\end{quote}

The combination of  individual rationality, fairness, and non-wastefulness is typically called stability in classical two-sided matching problems.  When both sides have strict or weak preferences/priorities, combining fairness, efficiency, and strategyproofness yields impossibility results. It is well-known that for the classical setting with strict preferences (priorities) on both sides (1) there is no mechanism that satisfies stability and strategyproofness for both sides \citep{Roth82a}; (2) stability and Pareto-efficiency on one side are incompatible~\citep{balinski1999tale}; (3) none of the well-known stable and/or efficient mechanisms are maximum size \citep{ouguz2020assignment}. Moreover, finding a maximum size stable matching is NP-hard when there can be ties in the preferences~\citep{BMM10a}. By contrast, for the setting that we consider with dichotomous agent preferences, the answer to our fundamental research question is surprisingly positive.

\paragraph{Contributions}
We present mechanisms for two-sided matching with dichotomous preferences on one side and strict preferences on the other side that satisfy the following properties:
\begin{enumerate}[nosep]
	\item non-wastefulness;
	\item individual rationality;
	\item maximum size;
	\item fairness;
	\item Pareto-efficiency on both sides; 
	\item strategyproofness on both sides; 
	\item non-bossiness;
	\item polynomial-time.
\end{enumerate}

\medskip
To the best of our knowledge, these are the first known mechanisms that satisfy all these properties simultaneously, including the striking properties of strategyproofness and Pareto-efficiency on both sides. 
We present two families of mechanisms, the SAFE and Rank-Maximal mechanisms. SAFE mechanisms are based on the idea of safe blocks that identify subsets of institutions that are not over-subscribed by agents and should be assigned first to reach a maximum matching. We also propose another class of mechanisms with a different perspective, the Rank-Maximal mechanisms,  based on the graph-theoretic notion of rank maximality. We show that these two families of mechanisms are equivalent. We then show that these mechanisms satisfy all the key properties listed above. The two different formulations for essentially the same mechanisms has several advantages. Firstly, the multiple perspectives provide different combinatorial insights into the mechanisms and provide tools to carefully analyse them and establish axiomatic properties. Whereas Rank-Maximal mechanisms are clearly polynomial-time, SAFE mechanisms provide additional intuition about how these mechanisms work and give further insights into their properties.  Table~\ref{tab:comparisonofrules} summarizes the properties satisfied by our mechanisms compared to other mechanisms in this setting, and highlights how our new mechanisms have striking advantages over existing mechanisms in the literature. 
In contrast to two prominent mechanisms for this setting, our mechanisms satisfy Pareto-efficiency for both sides, strategyproofness for both sides, and non-bossiness.\footnote{Some examples and omitted proofs are in the appendix.} 

\begin{table}[h!]
	\centering
	\scalebox{0.9}{%
		\begin{tabular}{lccccc}
			\toprule
			
			&DA& REV&\textbf{SAFE/Rank-Max} \\
			Properties&&& \\
			\midrule
			Individual rationality&\checkmark&\checkmark&\checkmark\\
			Fairness&\checkmark &\checkmark&\checkmark\\
			Institution Pareto-efficiency&--&--&\checkmark\\
			Agent Pareto-efficiency&--&\checkmark&\checkmark\\
			Both-sided Pareto-efficiency&--&--&\checkmark\\
			Weak agent strategyproofness &\checkmark&\checkmark&\checkmark\\
			Agent strategyproofness &\checkmark&--&\checkmark\\
			Institution strategyproofness &--&--&\checkmark\\
			Both-sided strategyproofness&--&--&\checkmark\\
			Non-bossiness&--& -- &\checkmark\\
			\bottomrule
			
		\end{tabular}
	}
	\caption{Axioms satisfied by various mechanisms. The SAFE/Rank-Max mechanisms are introduced in this paper. DA is the classical Deferred Acceptance rule applied in our context with tie-breakers. 
		REV was introduced by \citet{AzBr24a}.
	}
	\label{tab:comparisonofrules}
\end{table}

\section{Related Work}

Two-sided matching under preferences has a long history~\citep{Manl13a,RoSo90a}. Classical results focus on strict preferences on both sides~\citep{GaSh62a,Roth08a}.  When both sides have strict preferences, stability and strategyproofness from both sides is impossible~\citep{Roth82a}, and  Pareto efficiency and stability are incompatible~\citep{AbSo03a}.  The classical deferred acceptable algorithm can be applied to our problem as follows: break the ties in the dichotomous preferences of the agents to convert them into strict preferences and then run agent proposing deferred acceptance algorithm. However, the approach does not necessarily give a maximum size or Pareto optimal matching. It is also only strategyproof for one of the sides whereas we establish strategyproofness for both sides. Similarly, the Top Trading Cycles algorithm is another algorithm for matching under preferences but if applied to our context, it does not satisfy the maximum size property or fairness. 

There is also work on two-sided matching where both sides have dichotomous preferences. For example, \citet{BoMo04a} presented several results on randomized matching under dichotomous preferences. The results do not apply to our setting where one side has strict preferences. For example when both sides have dichotomous preferences, fairness or respect or priorities has very little bite. We also focus on deterministic matchings and are able to achieve several axiomatic properties without resorting to randomisation. \citet{Aziz17b} proposed rules for exchange problems when agents have dichotomous preferences. The model does not consider priorities of objects. 

One of the key properties that we focus on is computing a feasible \textit{maximum size} matching. The assignment maximization problem is not only relevant for daycares, but also for schools~\citep{APR05a,basteck2015matching}, and for the allocation of any goods which are in shortage or need to be rationed, such as public housing, vaccines and organs. \citet{RSU05a} study the kidney exchange problem to find a maximal and strategyproof mechanism. \citet{ergin2017dual}, \citet{andersson2020pairwise}, and \citet{ergin2020efficient} aim to maximize the number of patients receiving transplants in organ exchange including kidneys, lungs, and liver. Achieving a maximum and efficient matching between refugees and landlords is also an increasingly important problem in market design \citep{andersson2020assigning, delacretaz2019matching}. Another application in which assignment maximization of objects to agents is of significant concern is the house allocation problem \citep{Aziz17b, krysta2014size, abraham2005pareto}. One particular problem for which assignment maximization is important is healthcare rationing where we want to utilize the maximum number of healthcare resources. We discuss the connections below.


\citet{PSU+20a,PSU+23a} consider a healthcare rationing in which agents have types and they are matched to categories pertaining to particular types. An agent can be matched to a category if it satisfies some type that the category is dedicated to. The healthcare rationing problem can be abstracted to our model by ignoring the types and simply assuming that an agent and category find each other acceptable if they can be matched to each other in the healthcare problem. \citet{PSU+20a} focus on homogenous priorities whereas we allow institutions to have heterogeneous preferences. A standard approach for the problem is to treat reserves from categories in a sequential manner~\citep{KoSo16a,DPS20a,AyBo20a,AyTu20b}. These approaches violate axioms pertaining to neutrality or fairness towards categories. The myopic picks can also lead to outcomes that do not satisfy the maximum size property.

\citet{AzBr21a,AzBr21b,AzBr24a} consider a healthcare rationing problem with heterogeneous priorities. 
\citet{AzBr21a} showed that maximum size individually rational matching that satisfies respect for priorities can be computed in polynomial ti     me. The result is in contrast to the fact that when both sides have weak preferences/priorities, then the problem of computing a maximum size fair matching is NP-hard. Their Reverse Rejecting (REV) rule works by considering agents in the reverse order of a baseline ordering of agents and iteratively deciding whether the agents are to be removed from consideration or not.  In followup work, \citet{BEK23a} provide an algorithmic characterization of all valid allocations, exhibiting a bijection between sets of agents who can be allocated and maximum-weight matchings under carefully chosen rank-based weights.

\citet{AzBr24a} prove that their algorithm is strategyproof in the following sense: no agent can express some institution as unacceptable in order to get an advantage. Their strategy space also allows for agents to lower themselves in the priority ordering of the institutions. 
Since they examine healthcare rationing setting, they do not allow agents to falsely make themselves eligible for an institution for which they are eligible. In our setup, we allow agents to express acceptable institutions as unacceptable or express as acceptable those institutions that find unacceptable.  In our setup,  we explore strategyproofness whereby no agent wants to declare an acceptable institution as unacceptable or an unacceptable institution as acceptable. 
Another aspect that is overlooked in most of the previous work is Pareto-efficiency of the institutions. We design rules that have two additional advantages over previous rules: they are strategyproof for both agents and institutions, and they are Pareto-efficient for institutions. Finally, the REV rule of Aziz and Brandl violates non-bossiness. Aziz and Brandl posed the question whether there exists a rule satisfying maximum size, fairness, agent strategyproofness and non-bossiness. We show that our rules satisfy all these properties.

\section{Model with Dichotomous Agent Preferences}

Our model is a two-sided matching model between agents and institutions. Agents have dichotomous preferences over institutions, while institutions strictly rank all agents according to their preferences. Our setting has the following components. 

\begin{itemize}
	
	\item Set of $n$ agents: $N$
	
	\item Set of $m$ institutions: $D$
	
	\item Dichotomous agent preferences: for all $i\in N$, $A_i \subseteq D$ is the set of acceptable institutions for $i$. The set of acceptability reports is represented by the acceptability profile $A=(A_1,\ldots, A_n)$.
	
	\item Strict institution preferences: for all $d \in D$, $\succ_d$ denotes the preference list of institution $d$ over agents $N$. Institutions are assumed to find all agents acceptable.  The set of strict preference reports by institutions is represented by the preference profile $\succ=(\succ_{d_1},\ldots, \succ_{d_m})$.
\end{itemize}

We assume that each institution has unit capacity. This is to simplify the presentation. 
The results also hold if each institution may have higher capacity in which case each institution can be viewed as being divided into smaller institutions with unit capacity each. This is because the assumption that each slot at an institution is represented by a separate institution leads to a larger preference domain than the general multi-unit case, since with unit capacity each institution can be reported acceptable/unacceptable by each agent independently of other institutions, while for the multi-unit case all slots at the same institution are either acceptable or not. 

All the information is captured in a problem instance ($N, D, A, \succ)$. If we assume that $N$ and $D$ are fixed, then a problem is given by a \textit{profile} $(A,\succ)$, consisting of an acceptability profile for the agents and a preference profile for the institutions. 
We are interested in matching agents to institutions. Each agent is either matched to some institution or remains unmatched. Each institution is either matched to some agent or remains unmatched. A matching specifies which agent is matched to which institution, and which agents and institutions remain unmatched. Each acceptability profile $A$ has a simple graph representation. A problem gives an underlying \textit{acceptability graph} $G=(N \cup D, E)$ where for all pairs $i \in N$ and $d\in D$,  $\{i,d\}\in E$ if and only if  $d\in A_i$. A matching is \textit{individually rational} if it is a matching in graph $G$. We want to find a matching that is individually rational and maximum-size, which can be determined from the acceptability graph. 
In addition to individual rationality and maximum size, we also want the matching to be Pareto-efficient for the institutions. In order to determine Pareto-efficiency for the institutions, we need more information than just $G$, we also need to know the institutions' preference profile $\succ$. 

Given a profile $(A, \succ)$, we call the preference list of institution $d \in D$ that consists of only the agents who find institution $d$ acceptable, and which follows the priority ordering $\succ_d$,  the \textit{acceptance list} of institution $d$. Formally, the \textbf{acceptance list} of institution $d \in D$ is the ordered list $\succ_d^A$ of the agents in $\{ {i \in N: \, d \in A_i} \}$ for which, for all $i,j \in N$ such that $d \in A_i \cap A_j$, we have $i \succ_d^A j$ if and only if $i \succ_d j$. Note that  $\succ_d^A$ is a function of $A$ in addition to $\succ_d$, since only agents that report an institution acceptable appear in the institution's acceptance list. The set of acceptance lists for all institutions is the \textit{acceptance list profile} $\succ^A$.  	

We illustrate the matching problem by the next example.

\begin{example} \label{Ex-Intro}
	
	Suppose there are four agents $1,2,3,4$ and two institutions $d_1$ and $d_2$. So $N=\{1,2,3,4\}$ and $D=\{d_1,d_2\}$.
	Each agent specifies its set of acceptable institutions:
	\[A_1=\{d_1,d_2\}, \; A_2=\{d_2\}, \; A_3=\{d_1\}, \; A_4=\{d_1,d_2\}.\]
	
	The two institutions have the following preferences in decreasing order of preferences from left to right: 
	\begin{align*}
		\succ_{d_1}: &\quad 2,4,1,3\\
		\succ_{d_2}: & \quad 1,3,4,2
	\end{align*}
	
	The acceptance lists are:
	\begin{align*}
		\succ^A_{d_1}: &\quad 4,1,3\\
		\succ^A_{d_2}: & \quad 1,4,2
	\end{align*}
	
	One desirable matching is $\{ (d_1,4), (d_2,1)\}$, which matches agent $4$ to $d_1$ and agent $1$ to $d_2$. Agents 2 and 3 are unmatched. 
\end{example}

\section{Axioms}

A matching is \textit{individually rational} if for each $i\in N$, $d\in D$ that are matched to each other, it must be that $d\in A_i$.  A matching is \textit{maximum size} if there is no other matching that matches more agents to institutions. A matching $\mu$ is \textit{Pareto-efficient for agents} (or $N$\textit{-efficient}, for short) if there is no matching $\mu'$ in which no agent is worse off and at least one agent strictly prefers $\mu'$ to $\mu$. In other words, $N$-efficiency requires that the matching is not Pareto-dominated for agents. Under dichotomous preferences, it is well-known that a matching is $N$-efficient if and only if it is a maximum matching. Moreover, note that $N$-efficiency implies individual rationality, since a matching that includes any agent assigned to an unacceptable institution is Pareto-dominated by the matching which leaves such agents unmatched but otherwise makes the same matchings.

	A matching is \textit{fair} (or, equivalently, respects the preferences of the institutions) if the following scenario does not arise: there exists $i,j\in N$ and $d\in D$ such that $i$ is unmatched, $d\in A_i$, and $i\succ_d j$. 
	Note that when preferences are strict the notion of fairness corresponds to the standard stability notion for two-sided matching models, and also to \textit{no justified-envy} for one-sided matching models. It means that there is no priority violation for any agent in the matching.

	A matching $\mu$ is \textit{Pareto-efficient for institutions} (or $D$\textit{-efficient}, for short) if there is no matching $\mu'$ in which no institution is worse off and at least one institution  strictly prefers $\mu'$ to $\mu$. Note that a a matching that is Pareto-effient for institutions is fair, but the converse is not necessarily true. To illustrate the latter, consider matching $\{\{d_1,1\},\{d_2,4\}\}$ in Example~\ref{Ex-Intro}. This matching is fair but not $D$-efficient.

	A matching $\mu$ is \textit{Pareto-efficient for both agents and institutions} (or \textit{efficient,} for short) if there is no matching $\mu'$ in which no agent or institution is worse off and at least one agent or institution strictly prefers $\mu'$ to $\mu$. While in general Pareto-efficiency for each set of a partition of the agents separately does not imply Pareto-efficiency for the entire set of agents, in our model the conjunction of $N$-efficiency and $D$-efficiency implies efficiency. This follows because $N$-efficiency implies individual rationality, and individually rational maximum matching cannot be Pareto-dominated by matching a different set of agents, and thus a $D$-efficient individually rational maximum matching is efficient.    
	
	A \textit{mechanism} $f$ is a function which matches agents to institutions at each profile $(A,\succ)$ such that no more than one agent is matched to an institution and each agent is matched to at most one institution. All of the above definitions of axioms for individual matchings are extended to mechanisms in the usual manner.  For example, a matching {mechanism}  is \emph{fair} if it assigns a fair matching to each profile $(A, \succ)$. 
	
	We also aim for \textit{strategyproofness}: no agent has an incentive to change its preference to get matched; and no institution has an incentive to change its preference to get a more preferred matched agent.
	A mechanism is \emph{strategyproof for the agents} (or $N$-strategyproof, for short) if no agent can misreport its preferences to obtain a strictly better outcome.  A mechanism is  \textit{strategyproof for the institutions} (or $D$-strategyproof, for short) if no institution can misreport its preferences to obtain a strictly better outcome. 
	
	Observe that any mechanism that yields a maximum size matching would not be $D$-strategyproof if institutions were allowed to report some agents unacceptable. To see this, consider a case in which two institutions have identical preferences over acceptable agents 1 and 2: $1\succ_{d_1} 2$ and $1\succ_{d_2} 2$. The institution that is assigned 2 can ensure it gets 1 by reporting 2 as unacceptable. Hence we are able to ensure that a mechanism is $D$-strategyproof only because institutions need to accept all agents.  This is a standard and reasonable assumption for institutions, since daycares and schools typically cannot exclude any children from attending them. Similarly, if the institutions represent healthcare equipment or vaccines, patients and customers cannot be excluded from access based on the preferences of healthcare providers. Therefore, we require that all reported preference orderings $\succ_d$ by institution $d$ rank all agents in $N$.

	\subsection*{Existence of a Maximum and Fair Matching}

	In matching models with strict preferences a fair and maximum matching does not necessarily exist. It is well known that in a one-to-one or many-to-one two-sided matching model all stable matchings have the same cardinality, which follows from the Rural Hospital Theorem  \citep{mcvitie1970stable, roth1985college}. Moreover, one can easily construct examples where stable matching are not maximum matchings. For example, if higher-priority agents have more acceptable schools than lower-priority agents, and both priorities and preferences are homogeneous, it is possible that higher priority agents are assigned to all the institutions that are acceptable to lower-priority agents, based on the strict preferences of higher-priority agents, leaving lower-priority agents unassigned; this is not a maximum matching.  
	Since stability corresponds to fairness in one-sided matching models, this means that fairness cannot be reconciled with maximum matchings when preferences are strict. By contrast, we  point out that in our model with dichotomous preferences a maximum and fair matching always exists. 
	
	\begin{observation} \label{Prop-MaxFairExist}
		There exists a maximum and fair matching at every profile. 
	\end{observation}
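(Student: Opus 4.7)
The plan is to prove existence by a short lexicographic / Pareto-maximality argument rather than by explicit construction. Let $M^*$ denote the set of maximum individually rational matchings; this set is nonempty (a maximum matching exists in any bipartite graph) and finite, so one can pick $\mu \in M^*$ that is undominated in $M^*$ under the Pareto order induced by the institutions' preferences $\succ$ (with an unmatched institution regarded as worse than any matched one). I would then claim that any such $\mu$ is fair, and the entire observation follows.

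To verify the claim, I would argue by contradiction. Suppose $\mu$ violates fairness: there exist an unmatched agent $i$, an institution $d \in A_i$, and an agent $j = \mu(d)$ with $i \succ_d j$. Define $\mu' = (\mu \setminus \{\{d,j\}\}) \cup \{\{d,i\}\}$. Since $i$ is unmatched in $\mu$, $\mu'$ is a well-defined matching; it is individually rational because $d \in A_i$; and clearly $|\mu'| = |\mu|$, so $\mu' \in M^*$. In passing from $\mu$ to $\mu'$, institution $d$ strictly improves (from $j$ to $i$) while every other institution's partner is unchanged. Hence $\mu'$ Pareto-dominates $\mu$ within $M^*$ with respect to $\succ$, contradicting the choice of $\mu$.

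The argument is essentially a one-step swap, so there is no real obstacle beyond checking that the swap preserves both individual rationality and cardinality — both of which are immediate from $d \in A_i$ and from $i$ being unmatched in $\mu$. In particular no augmenting-path or algorithmic machinery is needed; the proof only uses finiteness of $M^*$ and the existence of a Pareto-undominated element in it. This same lex-optimality idea is also what one would expect to resurface later in the paper when strengthening the conclusion from fairness to $D$-efficiency.
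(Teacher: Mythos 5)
Your proof is correct. The core move---swapping the unmatched higher-priority agent $i$ in for $j$ at institution $d$, and checking that this preserves both cardinality and individual rationality---is exactly the swap the paper uses. Where you differ is in how you turn that local improvement into an existence proof: the paper starts from an arbitrary maximum matching and iterates the swap, then argues termination directly (no agent--institution pair can recur because each institution's assigned agent strictly improves in $\succ_d$ at every step, and there are finitely many pairs); you instead invoke an extremal principle, choosing a Pareto-undominated element of the finite set $M^*$ of maximum individually rational matchings with respect to the institutions' preferences and showing the swap would contradict undominatedness. The two are logically interchangeable here, but your packaging avoids the termination bookkeeping and yields a slightly stronger byproduct for free: every institution-side Pareto-undominated maximum matching is fair, which anticipates the paper's later remark that $D$-efficiency implies fairness. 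One small point worth making explicit in your write-up: a fairness violation at a \emph{maximum} matching necessarily has $d$ matched to some agent $j$, since otherwise adding $\{i,d\}$ would enlarge the matching---the paper states this explicitly, and your argument relies on it implicitly when you set $j=\mu(d)$.
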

	
	\begin{proof}
		Fix a profile $(A, \succ) \in {\mathcal A} \times \Pi$. Fix a maximum matching $\mu^0$ at $(A , \succ)$. If $\mu^0$ is not fair then there exists an agent-institution pair  $(c_1,d_1)$ such that $d_1 \in A_{c_1},$ $\mu^0_{c_1} = 0$ and, given that $\mu^0$ is a maximum matching, there exists an agent $c'_1$ such that $\mu^0_{d_1} = c'_1$ and $c_1 \succ_{d_1} c'_1$. Assign $c_1$ to $d_1$, and let the other assignments be the same as in $\mu^0$. Call this matching $\mu^1$. If $\mu^1$ is not fair then there is an agent-institution pair $(c_2,d_2)$ such that $d_2 \in A_{c_2},$ and, given that $\mu^1$ is a maximum matching, there exists an agent $c'_2$ such that $\mu^1_{d_2} = c'_2$ and $c_2\succ_{d_2} c'_2$. Assign agent $c_2$ to $d_2$, and let the other assignments be the same as in $\mu^1$. Call this matching $\mu^2$. Keep repeating the same argument and apply a similar modification to the matching iteratively. Observe that an agent-institution pair cannot be repeated in this sequence, since in each step the priority of the new agent assigned to an institution is higher than the priority of the previous agent who was assigned to this institution. Moreover, the number of such improvement steps is finite, given that there is a finite number of agents and a finite number of institutions. When we can no longer find such an agent-institution pair for some matching $\mu^k$ ($k \geq 0)$, the matching $\mu^k$ is fair by definition. Moreover, since each institution that had an agent assigned to it in $\mu^0$ still has an agent assigned to it in $\mu^k$, implying that $|\mu^0| = |\mu^k|$, and given that $\mu^0$ is a maximum matching, it follows that $\mu^k$ is also a maximum matching. Thus, we have shown that there exists a maximum and fair matching for an arbitrary profile $(A, \succ) \in {\mathcal A} \times \Pi$. 
	\end{proof}
	
	Given that a maximum matching is Pareto-efficient for agents, \ref{Prop-MaxFairExist} implies that Pareto-efficiency for agents can be reconciled with fairness, another result which does not hold when preferences are strict. This positive result is possible due to the indifferences in dichotomous preferences. Proposition~\ref{Prop-MaxFairExist} also follows from Theorem~\ref{ThmMax} and Theorem~\ref{ThmFair} together (see Section~\ref{sectionSAFEProp}) and is implied by the results of \cite{AzBr21a,AzBr21b} as well.  We included an intuitive direct proof here to verify this important observation directly.

	\section{SAFE Mechanisms}
	
	We introduce a class of mechanisms called SAFE that is based on the idea of `safe blocks'. 
	
	\subsection{The Acceptability Graph and Safe Blocks}

	Given a profile $(A,\succ)$, we define the following concepts for sets of institutions for any $k$ such that $1 \leq k \leq m$. In the definitions below, we exclude institutions that are not acceptable to any agent at $(A,\succ)$. We refer to such institutions as \textit{null-institutions}, since they have an empty acceptance list.  
	
	\medskip
	
	\noindent \textbf{Equal-acceptable set of institutions:}
	
	\smallskip
	
	\noindent A set of $k$ institutions that have exactly $k$ agents on their acceptance lists collectively.
	
	\smallskip
	
	\noindent \textbf{Under-acceptable set of institutions:}
	
	\smallskip
	
	\noindent A set of $k$ institutions that have less than $k$ agents on their acceptance lists collectively.
	
	\noindent \textbf{Over-acceptable set of institutions:}
	
	\smallskip
	
	\noindent A set of $k$ institutions that have more than $k$ agents on their acceptance lists collectively.
	
	\medskip

\medskip

\noindent \textbf{Safe} \bm{$k$}\textbf{-block:}

\smallskip

\noindent Given a profile $(A,\succ)$ and a set of institutions $\bar{D} \subseteq D$, a \textbf{safe} \bm{$k$}\textbf{-block},  or simply a \textbf{safe block,} in $\bar{D}$ is an equal-acceptable set of $k$ institutions $\tilde{D} \subseteq \bar{D}$, and there is no proper subset of $\tilde{D}$ which is also an equal-acceptable set. 

There may not exist any equal-acceptable set of institutions in $\bar{D}$ at a specific profile, which would imply that there is no safe block in  $\bar{D}$. On the other hand, there may be multiple safe blocks at a profile in $\bar{D}$, and safe blocks may even overlap. By definition, a safe block cannot have a proper subset which is also a safe block. We will show below that it is always feasible to match the $k$ agents to the $k$ institutions in a safe $k$-block in a one-to-one manner (see Lemma~\ref{L2}). However, it is easy to observe from the examples below that when safe blocks overlap it is not necessarily feasible to match some agent to each institution that is in a safe block, but all agents that are on the acceptance list of some institution in a safe block are matched to an institution in any maximum matching. Moreover, observe that safe blocks only depend on the underlying acceptability graph and are independent of the strict preferences of institutions, that is, they only depend on $A$ and are independent of $\succ$. 

\medskip

\begin{example} \rm{\textbf{Safe blocks}} \label{Ex-SafeBlock}
\end{example}
\noindent
Consider the following problems, given by their acceptance lists: 

\medskip

\begin{table}[htp]
	\centering
	\begin{tabular}{l l l l l l l}
		\textbf{Problem 1:} & & \textbf{Problem 2:} & &\textbf{ Problem 3:} & &  \textbf{Problem 4:} \\
		$d_1: 1, 3$ & \; \ & \, $d_1: 2, 1$ & \;  & \; $d_1: 2, 1, 3$ & \; & \;  $d_1: 1, 2$ \\
		$d_2: 3, 2$  & \; & \, $d_2: 1, 2$ & \; & \, $d_2: 1, 2$  & \;  & \;    $d_2: 2,1,3$  \\
		$d_3: 1$  & \;  & \, $d_3: 1, 2, 3$ & \; & \; $d_3: 1, 2, 3$ & \; & \;   $d_3: 3, 1, 2$ \\
		& \;  & \, & \; & \;  & \; & \;  $d_4: 3, 1$ \\
	\end{tabular}
\end{table} 

\begin{description}[nosep]
	
	\item[Problem 1:] Institution $\{d_3\}$ is a safe 1-block. There are no other safe blocks. In particular, although $\{d_1, d_2, d_3\}$ and $\{d_1, d_3\}$ are both equal-acceptable, these are not safe blocks because $\{d_3\}$ is also equal-acceptable.  
	
	\item[Problem 2:] The set  $\{d_1, d_2\}$ is a safe 2-block. 
	
	\item[Problem 3:] The three institutions $\{d_1, d_2, d_3\}$ together constitute a safe 3-block, since each proper subset is over-acceptable.

	\item[Problem 4:] There are four safe blocks: any three institutions constitute a safe 3-block. Observe that all three agents can be assigned to an institution, but one of the institutions will be unmatched. 
	\hfill $\diamond$
	
\end{description}

\medskip
\begin{example} \rm{\textbf{No safe block}}
\end{example}
\noindent
There are eight agents $(n=8)$ and six institutions $(m=6)$ with the following acceptance lists: 

\medskip

$ \; \; \; d_1: 1, 2, 3$

$\; \; \; d_2: 3, 2, 4$

$\; \;\ \; d_3: 1, 3$

$\; \;\ \; d_4: 1, 3, 4, 5$

$\; \;\ \; d_5: 6, 7, 4$

$\; \;\ \; d_6: 6, 8$

\bigskip

\noindent There is no set of institutions that is equal-acceptable in this problem, so there is no safe block. Each subset of the institutions is over-acceptable. 
\hfill $\diamond$

\subsection{Definition of the SAFE Mechanisms}

Let $\pi$ be a fixed ordering (i.e., a permutation) of the set of institutions $D$. Given a profile $(A,\succ)$, a \bm{$\pi$}\textbf{-sequential matching} at $(A,\succ)$ is the matching reached by iteratively assigning each institution in the order of $\pi$ the highest-priority agent on its acceptance list who is still unassigned, if there is any. We will also say that, given a subset of the institutions $\bar{D} \subseteq D$, the \bm{$\pi$}\textbf{-sequential \bm{$\bar{D}$}-matching} at $(A,\succ)$ is the matching restricted to $\bar{D}$ which is reached by iteratively assigning each institution in $\tilde{D}$ in the order of $\pi$ the highest-priority agent on its acceptance list who is still unassigned, if there is any. A matching is a \textbf{sequential matching} at $(A,\succ)$ if there exists a permutation $\pi$ of $D$ such that it is the $\pi$-sequential matching at $(A,\succ)$. A \textbf{sequential mechanism} is a mechanism which assigns a sequential  matching at $(A,\succ)$ to each profile $(A, \succ) \in {\mathcal A} \times \Pi $. Note that the permutation $\pi$ may vary with the profile $(A, \succ)$, and thus sequential mechanisms are not simply serial dictatorships \citep{satterthwaite1981strategy} that are constrained by the acceptability graph. 

A \textbf{SAFE (Sequential Allocation for Fairness and Efficiency) mechanism} is a sequential mechanism for which a fixed baseline permutation $\bar{\pi}$ over $D$ is used at each profile $(A,\succ)$ to determine the profile-dependent permutation $\pi (A, \succ) $ which leads to the $\pi$-sequential matching. The permutation $\pi$ is a specific ordering of institutions which depends on the acceptance lists of institutions at $(A, \succ)$ and on the baseline ranking $\bar{\pi}$. In each step of the iterative procedure there is either at least one safe block or no safe block, based on the remaining set of institutions with their acceptance lists containing only the remaining agents, and this together with the tie-breaking permutation $\bar{\pi}$ determines iteratively the next institution in $\pi (A,\succ)$ in each step of the procedure, which yields the $\pi (A,\succ)$-sequential matching. 
Thus, each SAFE mechanism is specified by the fixed baseline ranking $\bar{\pi}$, and hence the class of SAFE mechanisms is given by the $m!$ permutations of $D$. We will denote the SAFE mechanism with tie-breaking permutation $\bar{\pi}$ by $\varphi^{\mbox{\footnotesize SAFE}(\bar{\pi})}$. We define the mechanism formally as  Mechanism~\ref{mech:SAFE}.

\bigskip

%

\begin{algorithm}[h!]
	\KwIn{$(N,D,A,\succ)$ and baseline permutation $\bar{\pi}$ over $D$}
	\KwOut{A matching $M$}
	Initialize $G$ as the acceptability graph for $(N,D,A,\succ)$.\\
	Initialize matching $M$ to empty set.\\
	\While{$G$ is not the empty graph}
	{\If{there is a safe block in $G$}{
			find the highest ranked (according to permutation $\bar{\pi}$) institution $d$ \\in graph $G$ that is in a safe block and match the most preferred agent $i$ in $G$ (according to $\succ_d$) to d: add $\{i,d\}$ to $M$. Remove $i$ and $d$ and all adjacent edges from $G$\\}
		\Else{take the highest ranked (according to permutation $\bar{\pi}$) institution $d$ \\in graph $G$ and match to the most preferred agent in $G$ (according to $\succ_d$) to d:  add $\{i,d\}$ to $M$. Remove $i$ and $d$ and all adjacent edges from $G$}}
	\Return $M$
	
	\caption{\textbf{SAFE}}\label{mech:SAFE}
\end{algorithm}








\subsection{SAFE Mechanisms: An Illustrative Example}

The examples below demonstrate how SAFE mechanisms work. The updated acceptance lists are displayed for each remaining institution in each step after the initial step.  

\begin{example} Illustration of the SAFE mechanism. \label{Ex-SAFE}
\end{example}

\noindent Let $\bar{\pi}= (d_1, d_2, d_3, d_4)$, and consider the following acceptance list profile.

\medskip

\begin{center} 
	
	\begin{tabular}{l}
		
		$ \; \; \; d_1: 1, 2, 3$\\
		$\; \; \; d_2: 1, 2, 3$\\
		
		$\; \;\ \; d_3: 1$\\
		
		$\; \;\ \; d_4: 2$\\
		
	\end{tabular}
	
\end{center}

\medskip

\noindent There are two safe blocks in step~1, consisting of institution $d_3$ and $d_4$ respectively, and agent 1 is assigned to $d_3$ based on $\bar{\pi}$. After removing agent $1$ and institution $d_3$ from the problem, the updated acceptance list profile yields a new safe block: $\{d_1, d_2\}$. Thus, agent 2 is assigned to institution $d_1$ in step~2, and thus agent 3 is assigned to institution $d_2$ in step~3. The steps are illustrated below.

\medskip

\begin{table}[htp]
	\begin{tabular}{l l l l l l l l l}
		\textbf{Step 1:} & & \textbf{Step 2:} & &\textbf{ Step 3:} \\
		$d_1: 1, 2, 3$ & \;  & \; $d_1: \fbox{$2$}, 3$ & \;  & \; $ $ \\
		$d_2: 1, 2, 3$  & \; & \; $d_2: 2, 3$ & \; & \; $d_2:$ \fbox{$3$}\\
		$d_3:$ \fbox{$1$} & & & & \\
		$d_4:$ 2 &\;  & \; $d_4:$ 2 &\; & \\ 
	\end{tabular}
\end{table} 

\medskip

The final matching is $\{ (d_1,2), (d_2,3), (d_3,1), (d_4, 0)$.

\medskip

\noindent Note that in this problem the maximum matching is selected with the aid of safe blocks in each step, while the baseline ranking $\bar{\pi}$ plays a role in selecting the first agent in a safe block in each step. The result is the $\pi$-sequential matching with $\pi = (d_3, d_1, d_2)$.
\hfill $\diamond$

\medskip

\section{Rank-Maximal Mechanisms: An Equivalent Family of Mechanisms}

We now introduce another family of mechanisms, which we call Rank-Maximal, and show that it is equivalent to the family of SAFE Mechanisms. Rank-Maximal is a family of rules that are parametrized by a baseline permutation of institutions, just like SAFE mechanisms.   Based on a baseline permutation $\pi$ over institutions, a set $S\subset D$ is\textit{ lexicographically better} than another set $T\subset D$ if for the earliest (according to $\pi$) institution where $S$ and $T$ differ, it is the case that the corresponding institution in $S$ comes earlier according to $\pi$. The Rank-Maximal algorithm focusses on matching a lexi-optimal set of institutions that can call be matched. We say that such a set is\textit{ lexi-optimal}. The mechanism follows the ordering of the baseline permutation $\pi$ and iteratively matches each institution to the most preferred agent such that the resulting matching can be extended to a matching that matches all institutions in a lexi-optimal set of institutions. In the definition of the rule we first determine the set of institutions that will be matched, based on the acceptability graph of the problem. Then, given the corresponding restricted acceptability graph, we determine the assignments step-by-step such that each institution is assigned the most preferred agent that allows for matching all the institutions in the restricted graph. We will denote the Rank-Maximal mechanism with baseline permutation $\bar{\pi}$ by $\varphi^{\mbox{\footnotesize RankMax}(\bar{\pi})}$. The mechanism is formally specified as Mechanism~\ref{mech:RM}.

%
%
%

\begin{algorithm}[h!]  
	\KwIn{$(N,D,A,\succ)$ and permutation $\pi$ over $D$}
	\KwOut{Matching $M$}
	Set $W$ to empty set and let $G$ be the acceptability graph for $(N,D,A,\succ)$. \\
	\While{$\exists d\in D\setminus W$ such that $W\cup \{d\}$ can be matched in some matching in $G$}{
		Find the highest priority (according to $\pi$) such $d$\;
		Add to $W$ the highest priority (according to $\pi$) institution $d$ in $W$ such that  all elements of $W\cup \{d\}$ can be matched in some matching in $G$}
	Initialize $G$ as the acceptability graph for $(N,W,A,\succ)$ restricted to $W\subseteq N$\;
	Initialize matching $M$ to empty\;
	\While{$|M|<|W|$}{
		For the highest ranked institution $c\in D$ (according to $\pi$), find the highest priority agent (according to $\succ_c$) such that if we remove $c$ and $i$ from $G$, then the modified graph $G$ admits a matching that matches every institution\;
		Remove $i$ and $c$ from $G$\;
		Permanently match $i$ to $c$: add $\{i,c\}$ to $M$\;
	}
	\Return $M$.
\caption{\textbf{Rank-Maximal}}\label{mech:RM}
\end{algorithm}

For a permutation $\pi$ over institutions, consider the \textit{ranked acceptability graph} $(N \cup D, E, \ell)$ where the edge set represents the individual rationality relations 
and each edge is assigned a rank as follows. Suppose the edge set $E$ is partitioned into $r$ disjoint sets, i.e., $E = E_1\cup E_2 \cup \cdots \cup E_{|D|}$ where $E_i$ is the set of edged adjacent to institution $d=\pi(i)$ which are given a rank $i$.  
The \emph{signature} $\rho(M) = \langle x_1, x_2, ..., x_{r} \rangle$ of a matching $M$ in $G$ is a tuple of integers where each element $x_i$ represents the number of edges of rank $i$ in $M$. 

For a ranked bipartite graph, we compare the signatures of matchings in a lexicographical manner. A matching $M'$ with $\rho(M') = \langle x_1, \cdots, x_{r} \rangle$ is 
\textbf{\textit{strictly better}} 
than another matching $M''$ with $\rho(M'') = \langle y_1, \cdots, y_{r} \rangle$, if there exists an index $1 \leq k \leq r$ s.t. for $1 \leq i < k$, $x_i = y_i$ and $x_k > y_k$. 
A matching $M'$ is \textbf{\textit{weakly better}}  than another matching $M''$ if $M''$ is not strictly better  than $M'$. 
Let $M' \succ_{lex} M''$ denote that $M'$ is strictly better than $M''$ and let $M' \succsim_{lex} M''$ denote that $M'$ is weakly better than $M''$.

A matching $M$ in a ranked bipartite graph $G$ is \textbf{\textit{rank-maximal}} if $M$ is weakly better than any other matching $M'$ in $G$. 
A rank maximal matching can be computed in polynomial time~\citep{Manl13a}. 

\begin{lemma} \label{lemma:RM1}
A set of institutions matched in a rank maximal matching of a ranked acceptability graph is the lexi-optimal set of institutions. 
\end{lemma}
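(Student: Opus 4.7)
The plan is to observe that the lemma follows almost directly from unwinding definitions: the signature of a matching in the ranked acceptability graph is precisely the characteristic vector of its set of matched institutions under $\pi$, and the two lexicographic orders involved coincide.

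First, I would note that by construction every edge of rank $i$ is incident to $\pi(i)$. Since any matching $M$ contains at most one edge incident to any single institution, each entry $x_i$ of the signature $\rho(M) = \langle x_1, \dots, x_{|D|}\rangle$ lies in $\{0,1\}$, and $x_i = 1$ iff institution $\pi(i)$ is matched by $M$. Writing $W(M) := \{d \in D : d \text{ is matched in } M\}$, this says exactly that $\rho(M)$ is the characteristic vector of $W(M)$ indexed by the $\pi$-ordering.

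Next I would check that the lexicographic comparison on signatures defined earlier coincides with the lexicographic comparison on subsets used to define lexi-optimality. Because signatures are $0/1$ vectors, $\rho(M) \succ_{lex} \rho(M')$ holds iff at the first index $k$ where they disagree, $\rho(M)$ has a $1$ and $\rho(M')$ has a $0$; equivalently, iff at the earliest institution (under $\pi$) in $W(M) \triangle W(M')$, that institution belongs to $W(M)$. This is precisely the condition for $W(M)$ to be lexicographically better than $W(M')$ as subsets of $D$.

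Finally, $M$ being rank-maximal means $\rho(M) \succsim_{lex} \rho(M')$ for every matching $M'$ in the ranked acceptability graph. Translating through the equivalence above, $W(M)$ is lex-maximal among all sets of the form $W(M')$, i.e., among all subsets of $D$ that admit a matching covering them, so $W(M)$ is the lexi-optimal set of institutions. I do not foresee any substantive obstacle: the entire argument is a definition-chasing exercise, and the only step that deserves a moment of care is verifying the equivalence of the two lex orders, which one can confirm on a small worked example.
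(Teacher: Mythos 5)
Your proposal is correct and takes the same route as the paper, whose entire proof is the one-line remark that the claim holds by definition; you have simply spelled out the definition-chasing (signatures of matchings in the ranked acceptability graph are $0/1$ characteristic vectors of the matched institution sets, and the two lexicographic orders coincide). No gaps.
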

\begin{proof}
By definition, a rank maximal matching's matched institutions are lex-optimal. 
\end{proof}

Although a rank maximal matching of a bipartite graph need not be a maximum size matching in generalm for our particular problem that has more structure,  a rank maximal matching gives a maximum size matching.

\begin{lemma}\label{lemma:RM2}
For our problem, a rank maximal matching gives a maximum size matching. Equivalently, if a matching matches the lexi-optimal set of institutions, then the matching is maximum size.
\end{lemma}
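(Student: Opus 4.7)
The plan is to argue by contradiction using a standard bipartite augmenting-path argument; equivalently, this exhibits the transversal matroid structure on the matchable subsets of $D$ and identifies the first \texttt{while} loop of Mechanism~\ref{mech:RM} with the matroid greedy algorithm.

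First I would fix the lexi-optimal set $W$ produced by the first loop and any matching $M$ of $G$ that saturates $W$ (which exists by construction of $W$). Assume for contradiction that there is a matching $M^*$ of $G$ with $|M^*| > |M| = |W|$. Form the symmetric difference $M \oplus M^*$: since $G$ is bipartite, it decomposes into vertex-disjoint alternating paths and cycles, and because $|M^*|>|M|$ at least one component is an $M$-augmenting path $P$. Both endpoints of $P$ are unsaturated by $M$ and, by bipartiteness, lie on opposite sides; since the $D$-vertices unsaturated by $M$ are precisely $D\setminus W$, the institution-endpoint is some $d \in D\setminus W$.

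Next, augmenting $M$ along $P$ produces a matching $M''$ of size $|M|+1$ that saturates $W\cup\{d\}$. I would then look back at the iteration $j$ of the first loop at which $d=d_{\pi(j)}$ was examined and rejected, and write $W_{j-1}\subseteq W$ for the set of institutions already added at that point. Restricting $M''$ to its edges incident to $W_{j-1}\cup\{d\}$ yields a matching of $G$ that saturates $W_{j-1}\cup\{d\}$, because every vertex in $W_{j-1}\cup\{d\}\subseteq W\cup\{d\}$ is saturated by $M''$. Hence $W_{j-1}\cup\{d\}$ is matchable, contradicting the greedy test that refused to add $d$ at step $j$. This contradiction forces $|W|$ to equal the maximum matching size of $G$, so any matching that saturates $W$ (in particular the one output by Mechanism~\ref{mech:RM}) is of maximum size.

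The hard part will be the monotonicity observation used in the final step: that matchability of $W\cup\{d\}$ implies matchability of $W_{j-1}\cup\{d\}$ for every $W_{j-1}\subseteq W$. In our bipartite setting this is immediate — simply restrict the witnessing matching to the edges incident to the smaller set — but it is precisely the downward-closure / induced-subset property underlying the transversal matroid, and it is what vindicates the greedy decision via the augmenting path found in the final $W$. The Berge-style reasoning producing the augmenting path, and the identification of its $D$-endpoint, are routine.
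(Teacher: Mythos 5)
Your proof is correct, and it shares the combinatorial engine of the paper's argument --- Berge's lemma applied to a non-maximum matching, plus the observation that augmenting along the resulting path keeps every previously matched vertex matched while newly matching one more institution $d$ --- but the two proofs aim that observation at different targets. The paper contradicts the \emph{definition of rank-maximality} directly: since every edge at institution $\pi(i)$ carries rank $i$, the augmented matching weakly dominates the old one in every signature coordinate and strictly improves the coordinate of the newly matched institution, so the old matching was not rank-maximal. You instead contradict the \emph{termination condition of the greedy loop} that builds $W$: the augmented matching witnesses that $W\cup\{d\}$ is matchable, so the loop could not have stopped at $W$ (your detour through the iteration $j$ and downward closure of matchable sets is valid but unnecessary --- the algorithm as written only halts when no $d\in D\setminus W$ is addable to the full final $W$, so $W\cup\{d\}$ being matchable is already the contradiction). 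The paper's route proves the first formulation of the lemma (rank-maximal $\Rightarrow$ maximum) and is a two-line argument once the signature structure is noted; yours proves the second formulation (saturating the lexi-optimal $W$ $\Rightarrow$ maximum) and makes the transversal-matroid/greedy structure of Mechanism~\ref{mech:RM} explicit, which is a useful complementary perspective. One small point to tidy: you assert $|M|=|W|$ for a matching $M$ saturating $W$; this needs the remark that $M$ cannot match any institution outside $W$ (otherwise $W\cup\{d\}$ would be matchable, again contradicting termination), or you should simply take $M$ to be the restriction of a $W$-saturating matching to the edges incident to $W$.
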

\begin{proof}
Suppose the rank maximal matching of graph $(N\cup D, E, \ell)$ is not of maximum size.  By  Berge's lemma, it admits an augmenting path $p$. Now suppose we switch the matching $M$ to $M'$ by removing the matched edges in the augmenting path $p$ and selecting the complement of the edges in $p$. Note that those vertices that were matched in $p$ under $M$ continue to be matched under $M'$ and there is at least one additional edge. Therefore $M'$ is better than $M$ in terms of rank maximality which is a contradiction. After this step, institutions may exchange agents but the size of the matching does not change. 
\end{proof}

Next, we illustrate how a Rank-Maximal mechanism works.

\begin{example}[Rank-Maximal Mechanism]

We illustrate how RM works.

\begin{center} 

\begin{tabular}{l}
	
	$ \; \; \; d_1: 1, 2, 3$\\
	$\; \; \; d_2: 1, 2, 3$\\
	
	$\; \;\ \; d_3: 1$\\
	
	$\; \;\ \; d_4: 2$\\
	
\end{tabular}

\end{center}
The lexi-optimal set of institutions $W$ that will be matched, as computed by Rank-Maximal in the initial step of the algorithm, is $\{d_1,d_2,d_3\}$.
\[W=\{d_1,d_2,d_3\}.\] 
For $d_1$, it cannot be matched to $1$ because it does, then not all members of $\{d_1,d_2,d_3\}$ can be matched. Hence $d_1$ is matched to $2$. 

\begin{center} 
\begin{tabular}{l}
	$d_1:$ 1, \fbox{$2$}, 3 \\
	$d_2:$ 1, 2, {$3$} \\
	$d_3: $ {$1$}\\
	$d_4:$ 2\\
\end{tabular}
\end{center}

Institution $d_2$, cannot be matched to $1$ while $d_1$ is matched to $2$ as then $d_3$ cannot be matched. Also, $d_2$, cannot be matched to $2$ as $d_1$ is already matched to $2$. Hence $d_2$ is matched to $3$. 

\begin{center} 
\begin{tabular}{l}
	$d_1:$ 1, \fbox{$2$}, 3 \\
	$d_2:$ 1, 2, \fbox{$3$} \\
	$d_3: $ {$1$}\\
	$d_4:$ 2\\
\end{tabular}
\end{center}

Finally, $d_3$ is matched to $1$. 

\begin{center} 

\begin{tabular}{l}
	
	\textbf{Final matching:}\\
	
	$d_1:$ 1, \fbox{$2$}, 3 \\
	$d_2:$ 1, 2, \fbox{$3$} \\
	$d_3: $ \fbox{$1$}\\
	$d_4:$ 2\\
	
\end{tabular}

\end{center}
\end{example}

Note that in this example both the SAFE and Rank-Maximal mechanisms give the same outcome. This is not a coincidence.
We prove next that the two families of mechanisms that we introduced, SAFE and Rank-Maximal, are equivalent.

\begin{proposition}
For all permutations $\bar{\pi}$ over $D$, mechanisms $\varphi^{\mbox{\footnotesize SAFE}\, (\bar{\pi})}$ and $\varphi^{\mbox{\footnotesize RankMax}\,(\bar{\pi})}$ are outcome equivalent. 
\end{proposition}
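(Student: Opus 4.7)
The plan is strong induction on $|D|$; the base cases $|D|\le 1$ are immediate. For the inductive step at profile $(A, \succ)$, let $(d^*, i^*)$ denote the first pair output by $\varphi^{\mbox{\footnotesize SAFE}(\bar{\pi})}$. I will show that (i) $d^*$ lies in the lex-optimal basis $W$ of the transversal matroid on $D$ under $\bar{\pi}$ (which, by Lemma~\ref{lemma:RM1}, is the set of institutions matched by $\varphi^{\mbox{\footnotesize RankMax}(\bar{\pi})}$), (ii) Rank-Max also assigns $d^*$ to $i^*$, and (iii) the lex-optimal basis of the reduced graph $G' := G \setminus \{d^*, i^*\}$ equals $W \setminus \{d^*\}$. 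Combined with the inductive hypothesis applied to $G'$, this yields the equivalence on $G$.

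The crux is (i). SAFE chooses $d^*$ in one of two modes. In Mode~2 (no safe block in $G$), $d^*$ is the $\bar{\pi}$-highest non-null institution and is trivially the first element added by the greedy basis algorithm building $W$. In Mode~1 (some safe block exists), $d^*$ is the $\bar{\pi}$-highest institution lying in some safe block $\tilde{D}$ of $G$, and I will argue that $W_{\prec d^*} \cup \{d^*\}$ is matchable, where $W_{\prec d^*} := W \cap \{d : \bar{\pi}(d) < \bar{\pi}(d^*)\}$. Assume toward contradiction that it is not; by Hall's theorem some $S \subseteq W_{\prec d^*} \cup \{d^*\}$ satisfies $|N(S)| < |S|$, and since $W_{\prec d^*}$ is matchable (as a subset of $W$), $S$ must contain $d^*$. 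Writing $S' := S \setminus \{d^*\}$ and expanding $|N(S') \cup N(d^*)| \le |S'|$ together with $|N(S')| \ge |S'|$ gives $N(d^*) \subseteq N(S')$ and $|N(S')| = |S'|$, so $S' \subseteq W_{\prec d^*}$ is equal-acceptable. A minimal equal-acceptable subset of $S'$ is by definition a safe block of $G$ contained in $W_{\prec d^*}$---contradicting the choice of $d^*$ as the $\bar{\pi}$-highest institution in any safe block, since every institution strictly $\bar{\pi}$-earlier than $d^*$ must lie outside all safe blocks.

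For (ii) and (iii), the main tool is an edge-extension property of safe blocks: every proper subset of a safe $k$-block $\tilde{D}$ is strictly over-acceptable (otherwise a smaller equal-acceptable set would contradict minimality), so Hall's condition holds with strict slack on proper subsets of $\tilde{D}$, and consequently any edge from $d \in \tilde{D}$ to an agent in $\tilde{D}$'s combined list $T$ extends to a perfect matching of $\tilde{D}$. Applied to $(d^*, i^*)$, this shows that $(d^*, i^*)$ is part of some $W$-covering matching, so $W \setminus \{d^*\}$ is matchable in $G'$, and standard transversal-matroid deletion yields (iii). For (ii), when Rank-Max processes $d^*$ in its $\bar{\pi}$-order phase-2 scan of $W$, any $d' \in W_{\prec d^*}$ that would otherwise claim $i^*$ is blocked by its own matchability check: assigning $d' \to i^*$ would force $i^*$ away from a safe block containing $d^*$ and thereby violate matchability of the remaining $W$-institutions, via the same Hall-style argument used in (i). Hence $i^*$ remains available and Rank-Max assigns $d^* \to i^*$.

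The main obstacle is the Hall/safe-block contradiction securing $d^* \in W$ in Mode~1, together with the parallel check in (ii) that $i^*$ cannot be preempted by any $\bar{\pi}$-earlier $d' \in W_{\prec d^*}$; once these structural facts are in hand, (iii) and the closure of the induction follow from routine transversal-matroid manipulations.
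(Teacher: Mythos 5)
Your overall architecture is genuinely different from the paper's. The paper argues globally in two passes: it first shows the \emph{sets} of matched institutions coincide by taking the $\bar{\pi}$-first institution at which they differ and deriving a contradiction in each direction, and then compares the assignments block by block within the common set $W$. You instead induct on $|D|$ by peeling off the first pair $(d^*,i^*)$ that SAFE assigns. Your part (i) is the strongest piece: the Hall computation showing that $|N(S)|<|S|$ with $S\ni d^*$ forces $N(d^*)\subseteq N(S')$ and $|N(S')|=|S'|$, hence a safe block inside $W_{\prec d^*}$ contradicting the choice of $d^*$, is clean, correct, and in my view tighter than the paper's corresponding Case~2 (which argues that an unmatched non-null institution must eventually become a singleton acceptance list). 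Identifying the matchable institution sets as a transversal matroid and $W$ as its $\bar{\pi}$-greedy basis is also a legitimate and slightly more general framing than the paper uses.

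The gap is in (ii) and the assignment half of (iii). Your induction needs more than ``$i^*$ is not preempted'': it needs that for \emph{every} $d'\in W_{\prec d^*}$ and \emph{every} candidate agent $j$, the Rank-Max feasibility check on $G$ (``the remaining institutions of $W$, including $d^*$, stay matchable'') is equivalent to the check on $G'=G\setminus\{d^*,i^*\}$ (``the remaining institutions of $W\setminus\{d^*\}$ stay matchable without $i^*$''). One direction is trivial (append the edge $(d^*,i^*)$); the other requires rerouting an arbitrary $W$-covering matching so that $d^*$ is matched to $i^*$ specifically, and you assert rather than prove this. The specific justification you do give --- that assigning $i^*$ to $d'$ ``would force $i^*$ away from a safe block containing $d^*$ and thereby violate matchability of the remaining $W$-institutions'' --- does not go through as stated, because a safe block containing $d^*$ need not be contained in $W$: the paper's Problem~4 exhibits overlapping safe blocks of which at most one can be fully matched, so $W\cap\tilde{D}$ may be a proper subset of $\tilde{D}$ and can remain matchable even after $i^*$ is removed from its pool. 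For the same reason, ``$(d^*,i^*)$ extends to a perfect matching of $\tilde{D}$'' (your Lemma~\ref{L3}-style step) does not immediately yield ``$(d^*,i^*)$ is part of some $W$-covering matching'' unless you first establish that some safe block containing $d^*$ lies entirely inside $W$ and that its agent set is consumed entirely by that block in every $W$-covering matching. These claims are plausibly true and provable with Lemma~\ref{L1}--Lemma~\ref{L3} plus the lex-optimality of $W$, but as written they are the missing heart of the argument; without them the induction does not close.
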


\begin{proof}
Fix a permutation $\bar{\pi}$ over $D$ and a profile $(A,\succ)$. We will show that
\[\varphi^{\mbox{\footnotesize SAFE}\, (\bar{\pi})}(A,\succ) = \varphi^{\mbox{\footnotesize RankMax}\,(\bar{\pi})}(A,\succ).\] 

\noindent \textbf{Step 1:} Suppose for a contradiction that the set of institutions that are matched by $\varphi^{\mbox{\footnotesize SAFE}\, (\bar{\pi})}$ are not the same as those of $\varphi^{\mbox{\footnotesize RankMax}\,(\bar{\pi})}(A,\succ)$. Let $\bar{\pi}(t)$ (the $t$-th-ranked institution according to $\bar{\pi}$) be the first institution according to $\bar{\pi}$ with such a discrepancy. We distinguish between the following two cases. 

\begin{itemize}

\item[] \textbf{Case 1:} \textit{SAFE matches institution $\bar{\pi}(t)$, while Rank-Maximal does not.} 

Since the set of institutions $W \subseteq D$ that are matched by Rank-Maximal is lexi-optimal with respect to $\bar{\pi}$ at each profile $(A,\succ)$, if $\varphi^{\mbox{\footnotesize SAFE}\,(\bar{\pi})}$ matches $\bar{\pi}(t)$ and Rank-Maximal does not, then there is a discrepancy for a previous institution according to $\bar{\pi}$, namely, there exists an institution $\bar{\pi}(t')$ with $t'<t$ such that Rank-Maximal matches institution  $\bar{\pi}(t')$, but SAFE does not. This contradicts our assumption on $t$. 

\item[] \textbf{Case 2:} \textit{Rank-Maximal matches institution  $\bar{\pi}(t)$, while SAFE does not.}

For ease of notation, let $d = \bar{\pi}(t)$. Since Rank-Maximal assigns $d$ at $(A,\succ)$ and SAFE mechanisms are maximum mechanisms, our assumption on $t$ implies that there exists some institution $\tilde{d}$ such that $d$ is ranked prior to $\tilde{d}$ by $\bar{\pi}$ and $\tilde{d}$ is matched by $\varphi^{\mbox{\footnotesize SAFE}\,(\bar{\pi})}$ at $(A,\succ)$. Suppose that $d$ is a null-institution (i.e., $d$'s acceptance list is empty) in all the steps when subsequent institutions according to $\bar{\pi}$ (such as $\tilde{d}$) are selected to be matched by SAFE. This is not possible, since then $\varphi^{\mbox{\footnotesize RankMax}\,(\bar{\pi})}$ could not assign $d$ at $(A,\succ)$.

Thus, institution $d$ is not a null-institution in some step $k$ of the SAFE algorithm at $(A,\succ)$ such that some institution $\tilde{d}$ is selected in step $k$ to be matched by the SAFE algorithm, where $d$ is ranked prior to $\tilde{d}$ by $\bar{\pi}$. Let $D_t \subseteq D$ denote the set of unmatched institutions in step $t$ of the SAFE algorithm that are not null-institutions in step $t$. Note that $d \in D_t$. Then each subset of $D_t$ that includes $d$ is not a safe block, otherwise $d$ would be matched by SAFE. 
Moreover, since $d$ remains unmatched by $\varphi^{\mbox{\footnotesize SAFE}\,(\bar{\pi})}$ at this profile, it follows that the above also holds in each subsequent step of the SAFE algorithm at $(A,\succ)$. However, after the last step of the SAFE algorithm the updated acceptance list of $d$ must be empty ($d$ becomes a null-institution) which implies that it is a singleton in some prior step of the algorithm. This means that $d$ is a safe block in that step, which is a contradiction. Therefore, $d$ is also matched by $\varphi^{\mbox{\footnotesize SAFE}\,(\bar{\pi})}$ at $(A,\succ)$. 

\end{itemize} 

In sum, the set of matched institutions $W \subseteq D$ is the same for  
$\varphi^{\mbox{\footnotesize SAFE}\, (\bar{\pi})}(A,\succ)$ and $\varphi^{\mbox{\footnotesize RankMax}\,(\bar{\pi})}(A,\succ)$. (Note: $W$ is lexi-optimal according to $\bar{\pi}$ by Lemma~\ref{lemma:RM1}.)

\medskip
\noindent \textbf{Step 2:}
We need to show that both $\varphi^{\mbox{\footnotesize SAFE}\,(\bar{\pi})}$ and $\varphi^{\mbox{\footnotesize RankMax}\,(\bar{\pi})}$ yield the same assignments of agents to institutions in $W$ at profile $(A,\succ)$. Since each institution is matched in $W$, safe blocks are disjoint and each agent on the acceptance list of an institution in a safe block is assigned to an institution in this safe block. Given the definitions of the SAFE and Rank-Maximal mechanisms, it follows immediately that within each safe block both $\varphi^{\mbox{\footnotesize SAFE}\,(\bar{\pi})}$ and $\varphi^{\mbox{\footnotesize RankMax}\,(\bar{\pi})}$ assign the most preferred remaining agent on the first institution's acceptance list according to the baseline ranking $\bar{\pi}$ of the institutions. The same argument applies iteratively to new safe blocks after updating. Finally, one can easily check that any remaining institutions that are not in a safe block in any step are also assigned sequentially by both algorithms in the order of $\bar{\pi}$.  This concludes the proof.      
\end{proof}

%
%
%

%
%
%

\section{Properties of SAFE/Rank-Maximal Mechanisms}

\label{sectionSAFEProp}

In this section, we establish the key axiomatic properties of SAFE/Rank-Maximal Mechanisms.



We prove four lemmas first. Lemmas~\ref{L1}-\ref{L3} show that it is feasible to assign an agent to each institution in a safe block, even after making an arbitrary first assignment. This also means that each agent that finds at least one institution acceptable in a safe block is assigned by any maximum mechanism. These lemmas, including Lemma~\ref{L4}, are used in the proof of Theorem~\ref{ThmMax} which shows that SAFE mechanisms are maximum mechanisms, while the proof of Theorem~\ref{ThmSP}, which demonstrates that SAFE mechanisms are strategyproof,  relies on Lemmas~\ref{L1} and \ref{L2}. In the arguments of the proofs of the lemmas and theorems null-institutions are excluded by default.

We also note that we will use Hall's theorem \citep{hall1987representatives} repeatedly in the proofs below. Hall's theorem states, in our terminology, that it is feasible to assign an agent to each of $k$ institutions if and only if each subset of the $k$ institutions is either equal-acceptable or over-acceptable.    We state a series of lemmas. The proofs are in the appendix.

\begin{lemma} \label{L1}
Each under-acceptable set of institutions contains a safe block.
\end{lemma}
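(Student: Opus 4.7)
The plan is to take an inclusion-minimal non-over-acceptable subset $S$ of the given under-acceptable set $U$ and show that this $S$ is automatically a safe block, via a short contradiction argument.

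First I would define $S \subseteq U$ to be a subset of minimum cardinality among those whose joint acceptance list has size at most the number of institutions in the subset (i.e., subsets that are not over-acceptable). Such an $S$ exists because $U$ itself is under-acceptable and hence qualifies. By the minimality of $|S|$, every proper subset $S' \subsetneq S$ is over-acceptable; in particular, no proper subset of $S$ is equal-acceptable.

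The key step is to show that $S$ is itself equal-acceptable rather than merely under-acceptable. Suppose for contradiction that $S$ has strictly fewer than $|S|$ agents on its joint acceptance list. Pick any $d \in S$: the set $S \setminus \{d\}$ has $|S|-1$ institutions, and its joint acceptance list is contained in that of $S$, so it has at most $|S|-1$ agents. This contradicts the fact, just established from minimality, that $S \setminus \{d\}$ is over-acceptable. Hence $S$ is equal-acceptable.

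Combining the two conclusions --- $S$ is equal-acceptable and no proper subset of $S$ is equal-acceptable --- matches precisely the definition of a safe block, so $S \subseteq U$ is the desired safe block. The main obstacle is the equal-acceptability step, a small but essential Hall-flavored contradiction; everything else follows directly by unpacking the definitions.
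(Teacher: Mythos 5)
Your proof is correct. The paper argues by contradiction and induction from below: assuming the under-acceptable set $\tilde D$ contains no safe block, it shows that every $k$-element subset of $\tilde D$ is over-acceptable for every $k$ (a $(k+1)$-set contains an over-acceptable $k$-set, hence has at least $k+1$ agents; equal-acceptability would make it a safe block), so $\tilde D$ itself would be over-acceptable, a contradiction. You run the same monotonicity observation top-down as an extremal argument: a cardinality-minimal non-over-acceptable subset $S$ of $U$ must be equal-acceptable (otherwise $S\setminus\{d\}$ would already be non-over-acceptable, contradicting minimality) and all of its proper subsets are over-acceptable, so $S$ is a safe block. The two proofs rest on exactly the same fact --- that the joint acceptance list is monotone under set inclusion, so deleting one institution drops the institution count by one but the agent count by at most the same --- and yours is arguably the cleaner packaging, since it exhibits the safe block directly rather than deriving a global contradiction. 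Two small points: when $|S|=1$ your removal step degenerates, but a singleton can only be under-acceptable if it is a null-institution, which the paper excludes from these definitions, so that case does not arise; and despite your ``Hall-flavored'' remark, no appeal to Hall's theorem is needed here, only the inclusion monotonicity.
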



\begin{lemma} \label{L2}
It is feasible to assign an agent to each institution within a safe block. 
\end{lemma}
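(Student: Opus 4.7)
The plan is to verify Hall's marriage condition on the bipartite graph restricted to the safe block $\tilde{D}$ and the agents appearing on their combined acceptance list, and then invoke Hall's theorem to obtain the desired assignment.

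First I would set notation: let $\tilde{D}$ be a safe $k$-block and, for any $S \subseteq \tilde{D}$, write $N(S)$ for the set of agents who appear on the acceptance list of some institution in $S$. By the definition of equal-acceptable applied to $\tilde{D}$ itself, $|N(\tilde{D})| = k = |\tilde{D}|$, so Hall's condition holds with equality for the full set $\tilde{D}$. It remains to check Hall's condition for each proper subset $S \subsetneq \tilde{D}$.

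Next I would rule out the two bad possibilities for a proper subset $S$. The subset $S$ cannot be equal-acceptable, for otherwise it would be a proper equal-acceptable subset of $\tilde{D}$, contradicting the minimality clause in the definition of a safe block. The subset $S$ also cannot be under-acceptable: by \ref{L1}, any under-acceptable set of institutions contains a safe block; such a safe block would sit inside $S \subsetneq \tilde{D}$ and therefore be a proper subset of $\tilde{D}$ that is itself equal-acceptable, again contradicting minimality. Hence every proper subset $S \subsetneq \tilde{D}$ is over-acceptable, giving $|N(S)| > |S| \geq |S|$.

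Combining the two cases, every subset of $\tilde{D}$ satisfies $|N(S)| \geq |S|$, so Hall's theorem produces an injective matching that assigns a distinct agent, drawn from $N(S)$ for the relevant $S$, to each institution in $\tilde{D}$. I do not expect serious obstacles here; the only point that needs care is the appeal to \ref{L1} to exclude under-acceptable proper subsets, since that is what turns the one-sided minimality condition in the definition of a safe block into the full two-sided gap $|N(S)| > |S|$ required by Hall's condition for strict subsets.
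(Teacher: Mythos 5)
Your argument is correct and is essentially the paper's own proof: the paper likewise invokes Lemma~\ref{L1} to conclude that no subset of a safe block can be under-acceptable (since it would contain a safe block, contradicting minimality) and then applies Hall's theorem. You simply spell out the Hall-condition verification in more detail than the paper's two-line version, which is a fine elaboration rather than a different route.
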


\begin{proof}
By Lemma~\ref{L1}, each subset of a safe block is either equal-acceptable or over-acceptable. Then Hall's theorem  implies that it is feasible to assign an agent to each institution in a safe block.  
\end{proof}

It follows from Lemma~\ref{L2} that in any maximum matching each agent is matched who has any institution in its acceptance set that is in a safe block. However, note that it does not follow that each institution in a safe block is matched in a maximum matching. 

\begin{lemma} \label{L3}
Let an acceptance list be given for each institution in $\bar{D}$, where $\bar{D} \subseteq D$. Assign an arbitrary institution from $\bar{D}$ to an arbitrary agent on this institution's acceptance list. If $\bar{D}$ is a safe block, it is 
feasible to assign each institution in $\bar{D} \setminus {d}$ an agent other than $d$ on its acceptance list.
\end{lemma}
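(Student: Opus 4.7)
My plan is to reduce the claim to a single application of Hall's theorem, after first pinning down the combinatorial structure of a safe block. The key preliminary observation is that every proper nonempty subset of a safe $k$-block $\bar{D}$ must be over-acceptable. Indeed, by the definition of a safe block, no proper subset of $\bar{D}$ is equal-acceptable; and by Lemma~\ref{L1}, any under-acceptable set of institutions would contain a safe block, which would then be a proper subset of $\bar{D}$ that is equal-acceptable, contradicting minimality. So every proper nonempty subset of $\bar{D}$ is over-acceptable. I would state this as the first step of the proof.

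Next, let $d \in \bar{D}$ be the institution assigned initially to some agent $a$ on its acceptance list, and set $\bar{D}' := \bar{D} \setminus \{d\}$. Writing $N(S)$ for the union of the acceptance lists of institutions in $S \subseteq \bar{D}$, I want to verify Hall's condition for $\bar{D}'$ after removing $a$ from the pool of available agents, i.e., that $|N(S) \setminus \{a\}| \geq |S|$ for every $S \subseteq \bar{D}'$. For any nonempty $S \subseteq \bar{D}'$, $S$ is a proper subset of $\bar{D}$, hence over-acceptable by the first step, so $|N(S)| \geq |S| + 1$, and therefore $|N(S) \setminus \{a\}| \geq |N(S)| - 1 \geq |S|$. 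The empty set is trivial. Hall's theorem then yields a matching in the remaining acceptability graph that assigns to each institution in $\bar{D}'$ an agent on its (original) acceptance list different from $a$, which is exactly the claim.

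The only step that requires any real thought is the very first one, translating the minimality clause in the definition of a safe block (via Lemma~\ref{L1}) into the strict inequality $|N(S)| > |S|$ for every proper nonempty $S \subsetneq \bar{D}$. Once that structural fact is in hand, the loss of a single agent $a$ can absorb exactly one unit of slack in Hall's condition, and everything else is a direct verification. No further case analysis or auxiliary construction should be needed.
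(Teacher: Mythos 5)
Your proof is correct and follows essentially the same route as the paper's: both reduce the claim to Hall's theorem, using Lemma~\ref{L1} together with the minimality clause in the definition of a safe block to rule out under-acceptable and equal-acceptable proper subsets. Your version is in fact slightly cleaner, since isolating upfront the fact that every proper nonempty subset of a safe block is strictly over-acceptable replaces the paper's three-way case analysis on the number of agents collectively on the acceptance lists of $\bar{D} \setminus \{d\}$ (and its argument by contradiction applied to the updated lists) with a direct verification that Hall's condition survives the removal of a single agent.
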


\begin{lemma} \label{L4}
Let an acceptance list be given for each institution in $\bar{D}$, where $\bar{D} \subseteq D$. Assign an arbitrary institution from $\bar{D}$ to an arbitrary agent on this institution's acceptance list. If $\bar{D}$ has no safe block, then it is 
feasible to have a maximum matching for $\bar{D}$ which includes this initial assignment. 
\end{lemma}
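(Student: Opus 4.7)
The plan is to upgrade the absence of safe blocks in $\bar{D}$ into a strict form of Hall's condition, and then use that one unit of ``slack'' to absorb the arbitrary initial assignment.

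First I would observe that if $\bar{D}$ contains no safe block, then in fact $\bar{D}$ contains no equal-acceptable subset at all: any equal-acceptable subset has a minimal equal-acceptable subset inside it, which by definition is a safe block in $\bar{D}$. Combining this with Lemma~\ref{L1} (which rules out under-acceptable subsets, since each under-acceptable subset contains a safe block), every non-empty $S \subseteq \bar{D}$ must be strictly over-acceptable. In graph-theoretic terms, writing $N_{\bar{D}}(S)$ for the set of agents appearing on the acceptance list of at least one institution in $S$, we have $|N_{\bar{D}}(S)| > |S|$ for every non-empty $S \subseteq \bar{D}$.

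Next, fix the prescribed assignment, say institution $d \in \bar{D}$ to agent $a$ with $a$ on $d$'s acceptance list. Consider the sub-instance on $\bar{D}' := \bar{D} \setminus \{d\}$ after deleting $a$ from every acceptance list. For any non-empty $S \subseteq \bar{D}'$, the strict inequality $|N_{\bar{D}}(S)| \geq |S|+1$ from the previous paragraph implies $|N_{\bar{D}}(S) \setminus \{a\}| \geq |S|$, so Hall's condition holds on $\bar{D}'$ in the reduced instance. By Hall's theorem there is a matching that assigns to each institution in $\bar{D}'$ a distinct agent, none of them equal to $a$.

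Finally, I would append the pair $\{d,a\}$ to that matching. The result is a matching in the original acceptability graph that covers every institution in $\bar{D}$, so it is a maximum matching for $\bar{D}$, and it contains the prescribed initial assignment, as required. The only step that needs any real care is the first paragraph's observation that ruling out safe blocks rules out all equal-acceptable and all under-acceptable subsets; once that is in hand, the rest is a direct invocation of Hall's theorem, paralleling how Lemma~\ref{L2} follows from Lemma~\ref{L1}.
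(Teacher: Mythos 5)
Your proof is correct and follows essentially the same route as the paper's: both arguments upgrade the absence of safe blocks (via Lemma~\ref{L1} and the observation that a minimal equal-acceptable subset would itself be a safe block) to strict over-acceptability of every non-empty subset of $\bar{D}$, then note that deleting the initially assigned agent costs each subset at most one neighbour so Hall's condition still holds on $\bar{D}\setminus\{d\}$, and conclude by appending the initial pair. Your write-up is slightly more explicit about the Hall inequalities, but there is no substantive difference.
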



\begin{theorem} \label{ThmMax}
A SAFE mechanism is a maximum mechanism.
\end{theorem}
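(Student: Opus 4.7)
The plan is to prove, by induction on the number of iterations of the SAFE mechanism, the invariant that after $k$ steps the partial matching $M_k$ combined with any maximum matching of the residual acceptability graph $G_k$ has total size equal to the maximum matching size of the initial acceptability graph $G_0$. Since the algorithm terminates with $G_k$ carrying no admissible edge, this invariant immediately gives $|M|$ equal to the maximum matching size of $G_0$, which is the claim.

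For the inductive step, let $(d,i)$ be the edge SAFE adds at step $k+1$. One direction of the invariant is immediate, because appending the edge $(d,i)$ to any matching of $G_{k+1}$ produces a matching of $G_k$, so the maximum matching size of $G_{k+1}$ is at most one less than that of $G_k$. The substantive step is the reverse inequality, which is equivalent to exhibiting a maximum matching of $G_k$ that already contains the edge $(d,i)$: deleting $(d,i)$ from such a matching gives a matching of $G_{k+1}$ of size one smaller than the maximum in $G_k$.

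I would handle the two branches of the algorithm separately. In the no-safe-block branch, Lemma~\ref{L4} applied with $\bar{D}$ equal to the set of remaining institutions in $G_k$ and with the prescribed arbitrary initial assignment $(d,i)$ directly supplies such a maximum matching. In the safe-block branch, let $\tilde{D}$ be the safe block from which SAFE draws $d$, and let $\tilde{N}$ denote the set of agents appearing on the acceptance lists of institutions in $\tilde{D}$, so $|\tilde{N}|=|\tilde{D}|$. Starting from an arbitrary maximum matching $\mu_k$ of $G_k$, suppose $j$ institutions of $\tilde{D}$ are matched in $\mu_k$ and $x$ agents of $\tilde{N}$ are matched outside $\tilde{D}$. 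Deleting these $j+x$ edges of $\mu_k$ and inserting the full matching of $\tilde{D}$ onto $\tilde{N}$ provided by Lemma~\ref{L2} produces a matching of size $|\mu_k|+|\tilde{D}|-j-x$; maximality of $\mu_k$ forces $j+x=|\tilde{D}|$, and the swap yields a maximum matching of $G_k$ that saturates $\tilde{D}$ entirely within $\tilde{N}$. A final size-preserving swap inside $\tilde{D}$, furnished by Lemma~\ref{L3}, rearranges the matches of $\tilde{D}$ onto $\tilde{N}$ so that the pair $(d,i)$ is included; this rearrangement does not clash with any outside edge because every edge of the current maximum matching incident to $\tilde{D}$ now lies entirely within $\tilde{D}\cup\tilde{N}$.

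The main obstacle is the safe-block case: one must establish both the counting argument ruling out an unmatched agent in $\tilde{N}$ within a maximum matching, and the structural disjointness needed to apply Lemma~\ref{L3}'s internal rearrangement without disturbing the external matches. Once these two points are in hand, the induction closes, and termination with no admissible edge remaining yields $|M|$ equal to the maximum matching size of $G_0$.
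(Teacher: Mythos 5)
Your proposal is correct and follows essentially the same route as the paper's proof: an induction over the steps of the algorithm maintaining the invariant that a maximum matching of the original graph is still reachable, with Lemma~\ref{L4} handling the no-safe-block branch and Lemmas~\ref{L2} and~\ref{L3} handling the safe-block branch. The only difference is that your safe-block case spells out, via the explicit counting and exchange argument, the claim that some maximum matching of the residual graph saturates the safe block internally on its own agents --- a point the paper's proof asserts more briefly.
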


\begin{proof}
Fix a permutation $\bar{\pi}$ of institutions in $D$, and fix a profile $(A, \succ) \in {\mathcal A} \times \Pi$.  The SAFE mechanism $\varphi^{\bar{\pi}}$ leads to a permutation $\pi$ at profile $(A, \succ)$. Without loss of generality, let $\pi = (d_1, \ldots, d_m)$. We will show that after each step of the procedure, that is, after assigning the first remaining agent, if there is any, to institution $d_t$ on the acceptance list of $d_t$ in each step $t$, for $t=1, \ldots, m-1$, it is feasible to reach a maximum matching. Since this implies that selecting any remaining agent on the acceptance list of the last institution $d_m$ in step $m$ leads to a maximum matching (or, alternatively, if $d_m$ is a null-institution in step $m$ then the matching reached before step~m is a maximum matching), this will ensure that at the end of the SAFE mechanism procedure applied to $(A, \succ)$ the assignment made iteratively in each step results overall in a maximum matching.    

Fix $t \in \{1, \ldots, m-1\}$ and assume that after step $t-1$ in the procedure all previous assignments make it feasible to reach a maximum matching at the end. Note that this assumption holds vacuously for step 0, that is, prior to beginning the procedure, when no assignments have been made yet. Consider the following two cases based on the updated institution acceptance lists after step $t-1$: a) there is a safe block, b) there is no safe block.

\medskip	

\noindent Case a):  \textit{There is a safe block based on the updated institution acceptance lists after} \textit{step}~\mbox{$t-1$}.

\smallskip
\noindent In this case a yet unassigned institution $d$ is selected from a safe block (the first institution in a safe block according to $\bar{\pi}$), and thus it follows from Lemma~\ref{L3} that all agents on the acceptance lists of institutions in a safe block can be assigned to an institution in this safe block, regardless of the first institution that is chosen from this safe block. Since, for all $k$, a safe $k$-block is an equal-acceptable set of institutions by definition, only the $k$ agents on the acceptance lists of the $k$ institutions in the safe block can be assigned to these institutions by the individual rationality of the SAFE mechanism, and thus starting the assignment with any arbitrary institution in a safe block allows for reaching a maximum matching after making this assignment.        

\medskip

\noindent Case b): \textit{There is no safe block based on the updated institution acceptance lists after step}~$t-1$.

\smallskip
\noindent In this case Lemma~\ref{L4} implies that starting the assignment with any arbitrary yet unassigned institution $d$ allows for reaching a maximum matching after making this assignment.

Given the above arguments for both cases, it follows by induction that the SAFE mechanism selects a maximum matching at profile $(A, \succ)$. Since $(A, \succ)$ is an arbitrary profile in ${\mathcal A} \times \Pi$, this proves that the SAFE mechanism $\varphi^{\bar{\pi}}$ is a maximum mechanism for any fixed permutation $\bar{\pi}$ of the institutions.	 
\end{proof}

\begin{theorem} \label{ThmFair}
A SAFE mechanism is fair.
\end{theorem}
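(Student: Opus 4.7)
The plan is a short proof by contradiction that exploits how the SAFE mechanism selects an agent for each institution it processes. Suppose, toward a contradiction, that for some profile $(A,\succ)$ the matching $M = \varphi^{\mbox{\footnotesize SAFE}(\bar{\pi})}(A,\succ)$ fails fairness. Then by definition there exist $i,j \in N$ and $d \in D$ with $i$ unmatched in $M$, $d \in A_i$, $\{j,d\} \in M$, and $i \succ_d j$.

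The key step is to focus on the iteration of Mechanism~\ref{mech:SAFE} in which institution $d$ is selected to be matched. Let $G^{(t)}$ denote the current graph at the beginning of that iteration. I will argue that the edge $\{i,d\}$ is still present in $G^{(t)}$. First, $i$ is unmatched in the final matching $M$, and in Mechanism~\ref{mech:SAFE} an agent vertex is removed from $G$ only at the moment it is matched; hence $i$ has never been removed prior to this iteration, i.e.\ $i \in G^{(t)}$. Similarly $d \in G^{(t)}$ since we are about to process $d$. Because $d \in A_i$, the edge $\{i,d\}$ belonged to the initial acceptability graph, and the only way it could have been deleted is via the removal of $i$ or of $d$; neither has happened. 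Therefore $i$ lies on $d$'s current acceptance list at step $t$.

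Now the SAFE mechanism, upon selecting $d$ (whether through the safe-block branch or the otherwise branch of Mechanism~\ref{mech:SAFE}), matches $d$ to \emph{the} most preferred remaining agent in $G^{(t)}$ according to $\succ_d$. Since both $i$ and $j$ are present in $G^{(t)}$, and since $i \succ_d j$, the mechanism must pick $i$ over $j$ (or some agent $\succ_d$-ranked even higher than $i$, but in either case not $j$). This contradicts $\{j,d\} \in M$ and completes the argument.

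There is no real obstacle here: the proof is purely a bookkeeping observation about when vertices are removed in Mechanism~\ref{mech:SAFE}, combined with the greedy choice rule that the chosen agent is $\succ_d$-maximal among the surviving neighbours of $d$. The only point that needs a sentence of care is verifying that an unmatched agent persists in $G$ throughout the algorithm, which is immediate from the mechanism's description.
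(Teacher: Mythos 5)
Your proof is correct and follows essentially the same route as the paper's: both argue that since an unmatched agent is never removed from the graph, that agent is still on institution $d$'s acceptance list when $d$'s turn arrives in the sequential procedure, so $d$ must be matched to an agent it weakly prefers, yielding the contradiction. The only cosmetic difference is that the paper's version also explicitly covers the case where $d$ ends up unmatched, which your same argument handles immediately.
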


\begin{proof}
Fix a permutation $\bar{\pi}$ of institutions in $D$, and fix a profile $(A, \succ) \in {\mathcal A} \times \Pi$. The SAFE mechanism $\varphi^{\bar{\pi}}$ leads to a permutation $\pi$ at profile $(A, \succ)$ such that $\varphi^{\bar{\pi}}(A, \succ)$ is the $\pi$-sequential matching at $(A, \succ)$. Let $\mu \equiv \varphi^{\bar{\pi}}(A, \succ)$. Suppose, by contradiction, that there is an agent-institution pair $(c, d)$ such that i) $\mu_c = 0$, ii) $d \in A_c$,  and iii) either $\mu_d = 0$ or $\mu_d = c'$ such that $c\succ_{d} c'$. Given $\mu_c = 0$ and $d \in A_c$, when we get to the step in the algorithm where institution $d$ is next in permutation $\pi$, agent $c$ is on $d$'s acceptance list since $c$ is unassigned in $\mu$ and $d$ is acceptable to $c$, and thus $\mu_d = c'$ such that $c' \succ_{d} c$. This is a contradiction. Therefore, $\mu$ is a fair matching at $(A, \succ)$. The same argument holds for $\varphi^{\bar{\pi}}(A, \succ)$ at each profile $(A, \succ) \in {\mathcal A} \times \Pi$, and thus $\varphi^{\bar{\pi}}$ is fair for an arbitrary permutation $\bar{\pi}$ of the institutions.    
\end{proof}

\begin{theorem}
A SAFE mechanism is strategyproof {for agents}. \label{ThmSP}
\end{theorem}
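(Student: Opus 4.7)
The plan is to prove agent strategyproofness by contradiction via a coupled simulation of the SAFE algorithm on the true and misreported profiles. Fix a baseline permutation $\bar{\pi}$, an agent $i$, a profile $(A,\succ)$, and a misreport $A_i'$, setting $A' = (A_i', A_{-i})$. Denote $\mu = \varphi^{\mbox{\footnotesize SAFE}(\bar{\pi})}(A,\succ)$ and $\mu' = \varphi^{\mbox{\footnotesize SAFE}(\bar{\pi})}(A',\succ)$. Since SAFE is individually rational with respect to reported preferences, under dichotomous preferences agent $i$ can strictly benefit only if $\mu_i = 0$ and $\mu'_i = d^* \in A_i$; I will assume this configuration and aim for a contradiction. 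Observe that the acceptability graphs underlying the two runs differ only in edges incident to $i$: an institution $d$ carries $i$ on its acceptance list in the truth (resp.\ misreport) run iff $d \in A_i$ (resp.\ $d \in A_i'$).

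I will couple the two SAFE executions step by step, letting $t^*$ denote the earliest step at which they diverge, either by selecting different institutions or by selecting the same institution but matching it to different agents. Up to step $t^*$ the matched pairs are identical and $i$ is unmatched in both runs, so the residual acceptability graphs still differ only in $i$'s incident edges. There are two divergence types. In Case~I (same institution $\hat d$ selected, different agents assigned), since $\hat d$'s acceptance lists coincide except possibly for the presence of $i$, the divergence forces $i$ to be the top-priority remaining agent on $\hat d$'s list in exactly one of the runs. Because $\mu_i = 0$ the truth run cannot assign $i$ to $\hat d$, so the misreport run must; and if $\hat d$ were in $A_i$, then $i$ would be top on $\hat d$'s truth list as well and the truth run would match $i$ to $\hat d$, contradicting $\mu_i = 0$. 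Hence $\hat d \notin A_i$, and $i$ is permanently matched in the misreport run to an institution outside $A_i$, contradicting $\mu'_i = d^* \in A_i$.

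Case~II (different institutions $\hat d$ in truth, $\hat d'$ in misreport) is more delicate and relies on Lemmas~\ref{L1} and \ref{L2}. Because the residual graphs agree except for $i$'s incident edges, the divergence must be driven by a difference in safe-block structure. I will compare safe blocks subset by subset, using the observation that the equal-acceptability of a set $\tilde D$ of institutions changes between the two runs only when $\tilde D$ intersects $A_i$ and $A_i'$ differently, in which case $\bigl|\bigcup_{d \in \tilde D} L_d\bigr|$ shifts by at most one. Suppose first that $\hat d' \in A_i \cap A_i'$ and the misreport run matches $\hat d'$ to $i$ at step $t^*$; then $\hat d'$'s acceptance lists coincide in the two runs, so $i$ is also top on $\hat d'$'s truth list, and since the agents above $i$ on that list were matched in both runs before $t^*$ and $i$ remains unmatched, the truth run must eventually select $\hat d'$ (its list never becomes empty) and then match it to $i$, contradicting $\mu_i = 0$. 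If instead $\hat d' \in A_i' \setminus A_i$, then $\hat d' \notin A_i$, so any match of $i$ to $\hat d'$ in misreport is unacceptable. The main obstacle, and the technical heart of the proof, is the remaining subcase in which the misreport run selects $\hat d'$ at step $t^*$ without matching $i$ but subsequently matches $i$ to some $d^* \in A_i$ at a later step; here the residual runs diverge in more than $i$'s edges, and I plan to handle this by iteratively reapplying the same coupled analysis to the first subsequent misreport step at which $i$ gets matched, using the priority constraints above $i$ on $d^*$'s list together with the safe-block invariants inherited from Lemmas~\ref{L1} and \ref{L2} to propagate the contradiction with $\mu_i = 0$. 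Combining the cases, no profitable misreport for agent $i$ exists, and hence $\varphi^{\mbox{\footnotesize SAFE}(\bar{\pi})}$ is strategyproof for agents.
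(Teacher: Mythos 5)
There is a genuine gap, and it sits exactly where you flag it yourself. Your coupled-simulation strategy rests on the invariant that, up to the first divergence step $t^*$, the two residual acceptability graphs differ only in edges incident to $i$. Your Case~I and the first two subcases of Case~II do exploit this invariant correctly. But in the remaining subcase of Case~II --- the two runs select different institutions at $t^*$, $i$ is not matched at $t^*$, and $i$ only becomes matched to some $d^* \in A_i$ at a later step of the misreport run --- the invariant is destroyed: after $t^*$ the two runs have removed different institution--agent pairs, so the residual graphs differ in arbitrarily many edges not incident to $i$, the safe-block structures can differ for reasons unrelated to $i$'s report, and ``iteratively reapplying the same coupled analysis'' has no basis because there is no longer a clean relationship between the two residual instances to couple. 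You state a plan (``propagate the contradiction... using the priority constraints... together with the safe-block invariants'') but supply no mechanism for re-establishing any invariant after divergence, and this subcase is precisely where the substance of the theorem lies. As written, the proof is incomplete.

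The paper avoids this difficulty by a different decomposition. It splits into the case where the profile-dependent permutation $\pi$ is unchanged (where the argument is immediate) and the case where it changes; for the latter it does not track the two executions in parallel, but instead reduces a general misreport to a pure subtraction followed by a pure addition via $A''_c = A_c \cap A'_c$. For pure subtraction it shows, using Lemma~\ref{L2}, that destroying an existing safe block would force $c$ to have been matched under the truth (contradicting $\mu_c = 0$), and, using Lemma~\ref{L1} and Hall's theorem, that creating a new safe block either again forces $c$ to be matched under the truth or leaves the final matching unchanged; hence $c$ stays unmatched. For pure addition a symmetric analysis shows $c$ can only end up at an institution in $A'_c \setminus A_c$, which is truly unacceptable. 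Chaining the two steps through $A''_c$ then covers every deviation. If you want to salvage your approach, you would need either to adopt a similar decomposition of the deviation, or to prove a substantially stronger structural statement about how the set of matched agents (not just the matching) is invariant to reorderings induced by safe-block changes --- neither of which is present in your write-up.
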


\begin{proof}

Let $\psi^{\bar{\pi}}$ be a SAFE mechanism where $\bar{\pi}$ is the tie-breaking permutation of the institutions. Suppose by contradiction that $\psi^{\bar{\pi}}$ is not strategyproof. Then there exist an agent $c \in C$, a profile $(A, \succ)$ and an alternative set of acceptable institutions $A'_c$ for agent $c$ such that $\psi^{\bar{\pi}}_c (A, \succ) = 0$ and $\psi^{\bar{\pi}}_c ((A'_c, A_{-c}), \succ) \in A_c$. Let $\mu \equiv \psi^{\bar{\pi}} (A, \succ)$ and $\mu' \equiv \psi^{\bar{\pi}}_c ((A'_c, A_{-c}), \succ)$. Note that $\mu_c =0$.

\medskip

\noindent \textbf{Case 1:} \textit{There exists a permutation $\pi$ of the institutions such that $\mu$ is the $\pi$-sequential matching at $(A, \succ)$ and $\mu'$ is the $\pi$-sequential matching at $((A'_c, A_{-c}), \succ)$.}

\smallskip

In this case subtracting any acceptable institutions by reporting $A'_c$ instead of $A_c$ would not have any impact, since $c$ remains unassigned at $(A,\succ)$. Moreover, adding some acceptable institutions by reporting $A'_c$ instead of $A_c$ can only result in an assignment to an unacceptable institution at $((A'_c, A_{-c}), \succ)$ (i.e., $\mu_c \notin A_c$), which is a contradiction.    

\medskip

\noindent \textbf{Case 2:} \textit{There does not exist a permutation $\pi$ of the institutions such that $\mu$ is the $\pi$-sequential matching at $(A, \succ)$ and $\mu'$ is the $\pi$-sequential matching at $((A'_c, A_{-c}), \succ)$.}

\smallskip

In this case the difference between $A_c$ and $A'_c$ changes the permutation of institutions used in the sequential matching when we compare $\mu$ to $\mu'$. This means that $c$ either destroys at least one existing safe block at $(A, \succ)$, so that it is not a safe block at $((A'_c, A_{-c}), \succ)$, or $c$ creates at least one new safe block at $((A'_c, A_{-c}), \succ)$ compared to $(A, \succ)$ by misrepresenting her true preferences over the institutions (or both).

\begin{itemize}

\item[] \textbf{Subcase 2.1} \textit{Only subtract institutions: $A'_c \subset A_c$}

\end{itemize}

Assume first that $c$ only subtracts institutions from her acceptance set, that is, $A'_c \subset A_c$. Then if an existing safe block at $(A, \succ)$ is destroyed and thus it is no longer a safe block at $((A'_c, A_{-c}), \succ)$, this implies that agent $c$ is on the acceptance list of at least one institution in this safe block at $(A, \succ)$. Then, by Lemma~\ref{L2}, $c$ is assigned to an institution at $(A, \succ)$, since  $\psi^{\bar{\pi}}$ is a maximum mechanism by Theorem~\ref{ThmMax}. Therefore, $\mu_c \neq 0$, which is a contradiction. 

This implies that  $c$ creates at least one new safe block at $((A'_c, A_{-c}), \succ)$ compared to $(A, \succ)$. Let one such new safe block at $((A'_c, A_{-c}), \succ)$ be a safe $k$-block. Then it is an over-acceptable set of institutions at $(A,\succ)$ such that $k+1$ agents have these institutions in their acceptance sets: the $k$ agents in $C \setminus\{c\}$ who have them in their acceptance sets, as given by $A_{-c}$, in addition to agent $c$. If any of these $k$ agents, say $\tilde{c}$, is assigned to an institution in a step $t$ prior to reaching the first institution among these $k$ institutions in the SAFE procedure at $(A,\succ)$, then these $k$ institutions form an equal-acceptable set of institutions in step $t+1$. Suppose by contradiction that there is an under-acceptable subset of $k' < k$ institutions of these $k$ institutions in step $t+1$. Then these $k'$ institutions form an equal-acceptable set at step $t$, when $\tilde{c}$ was still on the acceptance lists of institutions, since by Lemma~\ref{L1} it could not have been under-acceptable, given that the $k$ institutions would constitute a safe block if we deleted $c$ from the acceptance lists. This means that the set of these $k'$ institutions is either a safe block at $(A,\succ)$ or contains a safe block, which contradicts the fact that after deleting $c$ from the acceptance lists of the $k$ institutions the set of $k > k'$ institutions becomes a safe $k$-block. Therefore, each subset of the $k$ institutions in step $t+1$ is either equal-acceptable or over-acceptable, and Hall's theorem implies that it is feasible to assign an agent to each of the $k$ institutions at $(A, \succ)$. This would mean that agent $c$ was assigned to an institution at $(A, \succ$), which is a contradiction. Therefore, we conclude that none of these $k$ agents is assigned to an institution in a step $t$ prior to reaching the first institution among the $k$ institutions in the SAFE procedure at $(A,\succ)$. In this case, however, it makes no difference whether these institutions are in a safe block and whether agents are assigned to them earlier than $\bar{\pi}$ calls for (due to $c$ reporting untruthfully that some of these institutions are not in her acceptance set), and the final matching remains the same at $((A'_c, A_{-c}), \succ)$ compared to $(A,\succ)$. This is a contradiction. 

In sum, since $A'_c \subset A_c$ and $\varphi_c ((A'_c, A_{-c}),\succ) \notin A'_c$, since $\varphi_c ((A'_c, A_{-c}), \succ)$ is a feasible matching, it follows that $\varphi_c ((A'_c, A_{-c}),\succ)=0$.

\begin{itemize}

\item[] \textbf{Subcase 2.2} \textit{Only add institutions: $A_c \subset A'_c$}

\end{itemize}

Now assume that $c$ only adds institutions to her acceptance set, that is, $A_c \subset A'_c$. 
Consider the case first where an existing safe block at $(A, \succ)$ is destroyed and thus it is no longer a safe block at $((A'_c, A_{-c}), \succ)$, given that agent $c$ adds institutions to her acceptance set. Let this safe block at $(A,\succ)$ be a safe $k$-block. As shown by Lemma~\ref{L2}, $c$ is not one of the $k$ agents on the acceptance lists of the institutions in this safe block.

If any of the $k$ agents who have these $k$ institutions in their acceptance sets, as given by $A_{-c}$, say $\tilde{c}$, is assigned to an institution in a step $t$ prior to reaching the first institution in the SAFE mechanism procedure at $((A'_c, A_{-c}), \succ)$ then the $k$ institutions in the safe block constitute an equal-acceptable set of institutions in step $t+1$. Suppose by contradiction that there is an under-acceptable subset of these $k$ institutions in this subsequent step. Then this strict subset of the $k$ institutions was equal-acceptable in step $t$, when $\tilde{c}$ was still on the acceptance lists of institutions, since by Lemma~\ref{L1} it could not have been under-acceptable, given that the $k$ institutions constitute a safe block when we delete $c$ from the acceptance lists. This means that this strict subset of the $k$ institutions is either a safe block at $((A'_c, A_{-c}), \succ)$ or contains a safe block, which contradicts the fact that after deleting $c$ from the acceptance lists of the $k$ institutions they become a safe $k$-block. Therefore, each subset of of these $k$ institutions in step $t+1$ is either equal-acceptable or over-acceptable, and Hall's theorem implies that it is feasible to assign an agent to each of the $k$ institutions at $((A'_c, A_{-c}), \succ)$. This would mean that agent $c$ is assigned to an institution at $((A'_c, A_{-c}), \succ)$. If $c$ is assigned  to an institution in the safe $k$-block then $\mu'_c \notin A_c$, and if $c$ is assigned prior to reaching the safe block then $\mu'_c = \mu_c$ and thus $\mu_c \in A_c$, implying a contradiction in both cases. Therefore, we conclude that none of these $k$ agents is assigned to an institution in a step $t$ prior to reaching the first institution in the SAFE mechanism procedure at $((A'_c, A_{-c}), \succ)$. In this case, however, it makes no difference whether these institutions are in a safe block and whether agents are assigned to them earlier than $\bar{\pi}$ calls for (due to $c$ reporting untruthfully that some of these institutions are not in her acceptance set), and the final matching remains the same at $((A'_c, A_{-c}), \succ)$ compared to $(A,\succ)$. This is a contradiction.

Finally, if $c$ creates at least one new safe block at $((A'_c, A_{-c}), \succ)$ compared to $(A, \succ)$ by adding institutions to her acceptance set, then Lemma~\ref{L2} implies that $c$ would be assigned to an institution in this safe block at $((A'_c, A_{-c}), \succ)$, unless $c$ is assigned at a prior step at $((A'_c, A_{-c}), \succ)$. In the latter case, if this prior step is the same at profile $(A \succ)$, then $\mu_c \in A_c$, which is a contradiction. Thus, $c$ is assigned to one of the $k$ institutions in a new safe block $((A'_c, A_{-c}), \succ)$ compared to $(A, \succ)$. Note that since $c$ was not assigned to $\mu'_c$ at $(A,\succ)$, it is not possible that if $\mu'_c$ is earlier in the institution permutation than at $(A,\succ)$, while otherwise the matching procedure remains the same, that $c$ is assigned to $\mu'_c$, except if $c$ was not on the acceptance list of $\mu'_c$ at $(A, \succ)$, that is, $c$ can only be assigned to $\mu'_c$ in this case if $\mu'_c \in A'_c \setminus A_c$. This means that according to the true preferences of agent $c$, $\mu'_c$ is unacceptable, that is $\mu'_c \notin A_c$. This means that the manipulation attempt was not successful, which is a contradiction.         

\medskip 

\textbf{Completion of Case 2:} 

\smallskip

We have shown in subcase 2.1 that agent $c$ cannot manipulate the outcome by only subtracting institutions and remains unmatched. Let $A''_c = A_c \cap A'_c$. Then $ A''_c \subseteq A_c $ and thus $\varphi_c((A''_c, A_{-c}), \succ) = 0$, by subcase 2.1. However, $ A''_c \subseteq A'_c $ and since we have shown in subcase 2.2 that agent $c$ cannot manipulate the outcome by only adding institutions, $\varphi_c((A'_c, A_{-c}), \succ) \notin A''_c$. Since it is a feasible matching, $\varphi_c ((A'_c, A_{-c}),\succ)\notin (A_c \setminus A'_c)$, and therefore $\varphi_c ((A'_c, A_{-c}) \succ) \notin A_c$, which means that the manipulation attempt was not successful. Therefore, subcases 2.1 and 2.2 together cover all possible cases of manipulation by agent $c$ for Case~2. 

\bigskip

\noindent In sum, since an arbitrarily chosen agent $c$ cannot manipulate in either Case 1 or Case 2, a SAFE mechanism is strategyproof.
\end{proof}

\begin{theorem}
The SAFE/Rank-Maximal Mechanisms are strategyproof for institutions. 
\end{theorem}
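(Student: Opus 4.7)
The plan is to exploit the fact, established by the SAFE/Rank-Maximal equivalence and the preceding lemmas, that the set $W$ of matched institutions depends \emph{only} on the acceptability graph $G$ (equivalently on $A$) and on the baseline permutation $\bar{\pi}$, and not on the preference profile $\succ$ of the institutions at all. Since each institution is required to declare a ranking over the entire set $N$ (no agent can be declared unacceptable), an institution's report $\succ_d$ never changes the edges of $G$, never changes which sets are under-/equal-/over-acceptable, and hence never changes the safe-block structure or the lexi-optimal set $W$. So first I would record this observation as a starting lemma: the matched-institution set $W = W(A, \bar{\pi})$ is invariant under changes in $\succ$. In particular, no unmatched institution can become matched through any misreport, so it suffices to handle institutions with $d \in W$.

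Next, I would run the Rank-Maximal formulation (cleaner for this argument than SAFE) step by step along $\bar{\pi}$ and fix the institution $d \in W$ under scrutiny. Let $t$ be $d$'s position in $\bar{\pi}$ restricted to $W$. The claim to prove is: the set of agents $F_d \subseteq N$ from which $d$ gets to choose in step $t$ is independent of $\succ_d$. This set $F_d$ is by definition the set of agents $i$ still unmatched at step $t$ for which removing $i$ together with $d$ from the current graph leaves a matching that covers all remaining institutions of $W$. Every ingredient of this definition — the graph $G$, the set $W$, the institutions processed in steps $1,\ldots,t-1$, and the agents they chose — depends only on $A$, $\bar{\pi}$, and $\succ_{-d}$; none of them depends on $\succ_d$, because $\succ_d$ is only consulted when $d$'s own turn arrives. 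A short induction on the step number makes this precise: for $k<t$ the $k$-th processed institution $d_k$ picks the $\succ_{d_k}$-top feasible agent, which is a function of $\succ_{d_k}$ and the already-fixed graph state, neither of which involves $\succ_d$.

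Once $F_d$ is shown to be independent of $\succ_d$, the mechanism instructs $d$ to take the $\succ_d^{\text{reported}}$-top element of $F_d$. If $d$ reports truthfully, $d$ receives the $\succ_d$-top element of $F_d$; under any misreport $\succ_d'$, $d$ receives the $\succ_d'$-top element of $F_d$, which is weakly worse under $d$'s true preference $\succ_d$. Hence no profitable deviation exists, and $\varphi^{\mbox{\footnotesize RankMax}(\bar{\pi})}$ is $D$-strategyproof; by the equivalence proposition this also gives $D$-strategyproofness of $\varphi^{\mbox{\footnotesize SAFE}(\bar{\pi})}$.

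The main obstacle I anticipate is making the induction in the second paragraph airtight — in particular, justifying that the feasibility constraint ``removing $(i,d)$ still leaves a perfect matching of the remaining elements of $W$'' really is a graph-theoretic condition on $(A,\bar{\pi},\succ_{-d})$ alone. I would handle this by noting that feasibility is determined by Hall's condition on the residual bipartite graph, which is a function purely of $A$ and of which vertices have been removed so far; since the removed vertices up to step $t-1$ are determined by the greedy top-choice picks of $d_1,\ldots,d_{t-1}$ (each using only their own $\succ_{d_k}$), and since $d \notin \{d_1,\ldots,d_{t-1}\}$, no component of this data involves $\succ_d$. With that pinned down, the remainder is essentially the standard serial-dictatorship strategyproofness argument restricted to a profile-dependent but $\succ_d$-independent order of institutions.
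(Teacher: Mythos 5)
Your argument is correct and follows essentially the same route as the paper's own proof: both rest on the observation that the lexi-optimal set $W$ and all steps of the Rank-Maximal procedure prior to $d$'s turn depend only on the acceptability graph and the other institutions' preferences, so the feasible set at $d$'s turn is fixed and $d$ simply receives the reported-top element of it. Your write-up is a more careful, fully spelled-out version (the explicit induction and the Hall-condition justification) of the paper's terser argument, but there is no difference in substance.
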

\begin{proof}
The set of lexi-optimal set of institutions that can be matched cannot be changed by any institution, as it is based on the acceptability graph. 

Suppose for contradiction that SAFE/Rank-Maximal Mechanisms are not strategyproof for institutions. Then there is an institution $d$ that can change its preference over agents to be assigned a more preferred agent. Before $d$'s turn comes, note that $d$ cannot affect the matches of institutions before it in the order $\pi$ as the decision about whether a previous institution $d'$ can match to a particular agent depends on the acceptability graph. 
When $d$'s turn comes in Algorithm~\ref{mech:RM}, it is assigned the most preferred agent that it can get while ensuring that the remaining institutions in $W$ are also matched. Since $d$ cannot affect the acceptability graph, Algorithm~\ref{mech:RM} is strategyproof for institutions.

\end{proof}

We show that the outcome of the SAFE/Rank-Maximal Mechanisms can be computed in polynomial time. 

\begin{theorem}
The outcome of the SAFE/Rank-Maximal Mechanisms can be computed in polynomial time. 
\end{theorem}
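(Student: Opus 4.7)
The plan is to exploit the equivalence of the SAFE and Rank-Maximal mechanisms established in the earlier proposition: it suffices to show that Rank-Maximal (Mechanism~\ref{mech:RM}) runs in polynomial time, since the two mechanisms produce identical outcomes. I will bound the running time of Mechanism~\ref{mech:RM} by reducing every feasibility test it performs to a maximum bipartite matching computation on the acceptability graph $G = (N \cup D, E)$, which runs in polynomial time (for example, $O(|E|\sqrt{|N|+|D|})$ via Hopcroft--Karp).

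First I would analyse the outer loop that builds the lexi-optimal set $W$. This loop iterates at most $|D| = m$ times, and in each iteration, for the next candidate institution $d$ in the order $\bar{\pi}$, we must decide whether $W \cup \{d\}$ can be saturated by some matching in $G$. This is a standard bipartite matching feasibility question: compute a maximum matching in the subgraph of $G$ induced by $N \cup (W \cup \{d\})$, and check whether its size equals $|W \cup \{d\}|$. Each such check is polynomial, and there are at most $m$ checks, giving an overall polynomial cost for constructing $W$. Equivalently, one can invoke the standard result (cited in the paper) that a rank-maximal matching of a ranked bipartite graph is polynomial-time computable, together with Lemma~\ref{lemma:RM1}, to obtain $W$ directly.

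Next I would analyse the inner loop that produces the assignment on $W$. This loop iterates $|W| \le m$ times, selecting at each step the highest-$\bar{\pi}$-ranked unmatched institution $d$ and then, scanning $\succ_d$ from top to bottom, picking the first agent $i$ such that $G \setminus \{i,d\}$ still admits a matching that saturates all remaining institutions in $W$. For each candidate agent (at most $n$ per institution), the saturation test is again a single maximum bipartite matching computation on $G$ restricted to the current unmatched institutions in $W$ and their acceptable agents, so each test is polynomial. The total number of feasibility tests is at most $mn$, which is polynomial in the input size.

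Combining the two phases, Mechanism~\ref{mech:RM} performs $O(mn)$ maximum bipartite matching computations, each polynomial-time, and a polynomial amount of additional bookkeeping to maintain $G$, $W$, and $M$. Therefore the overall running time is polynomial. The main (minor) subtlety is verifying that the saturation tests in the inner loop are well-defined and correct: Lemma~\ref{lemma:RM2} guarantees that a matching saturating $W$ exists at the start, and an easy invariant argument shows that after each greedy assignment the residual graph still admits a matching saturating the remaining institutions, so some candidate agent always passes the test. With this invariant in place, the polynomial-time bound is immediate.
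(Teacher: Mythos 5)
Your proposal is correct and follows essentially the same route as the paper: both arguments reduce every feasibility test in Mechanism~\ref{mech:RM} (building the lexi-optimal set $W$ and then greedily assigning agents) to a polynomial-time maximum bipartite matching computation and observe that only polynomially many such calls are made. Your version merely spells out the $O(mn)$ bound on the number of calls and the saturation invariant more explicitly than the paper does.
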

\begin{proof}
Firstly, Algorithm~\ref{mech:RM} builds up a lexi-optimal set of institutions $W$ that can all be matched by repeatedly calling an algorithm that computed a maximum size matching of a given a graph.  A maximum size matching of a bipartite graph can be computed in polynomial time (more precisely, cubic time in the number of vertices of the graph) by any of several well-established methods such as Kuhn's algorithm \citep{Kuhn55a} or the Hopcroft-Karp-Karzanov algorithm~\citep{HoKa73a,Karz73a}.
While ensuring that a matching that matches each element of $W$ exists, it finds the highest priority agent $i$ of the highest ranked institution $c$ such that if $i$ and $c$ are matched, every element in $W$ can be matched. Hence, if $i$ and $c$ are removed from the problem, all the remaining elements in $W$ can still be matched. The process iterates to match each element in $W$ and returns a matching that  match each element of $W$.\end{proof}

\begin{theorem}
The SAFE/Rank-Maximal Mechanisms are Pareto-efficient for the institutions. 
\end{theorem}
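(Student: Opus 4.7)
The plan is to argue by contradiction. Suppose the matching $\mu \equiv \varphi^{\mbox{\footnotesize RankMax}(\bar{\pi})}(A,\succ)$ is Pareto-dominated from the institutions' side by some other matching $\mu'$. My first step is to pin down the set of institutions that are matched under $\mu'$. Since $\mu$ matches exactly the lexi-optimal set $W$ (by Lemma~\ref{lemma:RM1}) and $|W|$ equals the maximum matching size (by Lemma~\ref{lemma:RM2} and Theorem~\ref{ThmMax}), and since any institution strictly prefers being matched to any agent over remaining unmatched, no institution in $W$ can become unmatched under $\mu'$ (that would strictly worsen it) and no institution outside $W$ can become matched under $\mu'$ (that would force $|\mu'|>|W|$, contradicting maximality). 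Hence the set of matched institutions under $\mu'$ is exactly $W$.

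Next, I would locate the first point of disagreement. Let $d^*\in W$ be the institution that appears earliest in the baseline order $\bar{\pi}$ for which $\mu'_{d^*}\neq \mu_{d^*}$. By Pareto-domination, $\mu'_{d^*}\succ_{d^*}\mu_{d^*}$, while every institution $d$ preceding $d^*$ in $\bar{\pi}$ satisfies $\mu'_d = \mu_d$.

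The main step is to derive the contradiction by replaying the selection rule of the Rank-Maximal algorithm at the iteration that processes $d^*$. At that iteration, the working graph $G$ has had the first $t-1$ institutions of $\bar{\pi}$ (all in $W$) removed, along with the agents $\mu_d$ assigned to them. Because $\mu'$ agrees with $\mu$ on those institutions, the assignments that $\mu'$ makes to $W\setminus\{d^*\}$ together with its assignment of $\mu'_{d^*}$ to $d^*$ all use agents from the current $G$ (distinctness is inherited from $\mu'$ being a matching). Thus, after deleting $d^*$ and $\mu'_{d^*}$ from $G$, the remaining institutions of $W\setminus\{d^*\}$ can all be matched (using the restriction of $\mu'$). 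This means $\mu'_{d^*}$ was an eligible candidate in the algorithm's search at that iteration, so the algorithm would have chosen an agent at least as preferred as $\mu'_{d^*}$ for $d^*$, contradicting the actual pick $\mu_{d^*}\prec_{d^*}\mu'_{d^*}$.

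The step I expect to be the main obstacle is the bookkeeping needed to verify that $\mu'$ really does induce a feasible matching of the reduced graph that uses $\mu'_{d^*}$ at $d^*$ and covers all of $W\setminus\{d^*\}$; this hinges precisely on the equality $\mu'_d=\mu_d$ for every institution $d$ preceding $d^*$ in $\bar{\pi}$, together with the disjointness of agents in a matching. Once that is checked, the contradiction with the greedy (highest-priority-subject-to-feasibility) selection rule of Mechanism~\ref{mech:RM} is direct, and the theorem follows.
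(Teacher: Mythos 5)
Your proof is correct, and it reaches the conclusion by a somewhat different route than the paper. The paper's own proof is a short trading-cycle argument: if the outcome is not $D$-efficient it ``admits a trading cycle,'' and the highest-ranked institution $d$ in such a cycle could have been permanently matched to the more-preferred agent it points to, contradicting the greedy choice of Mechanism~\ref{mech:RM}. The underlying engine is the same as yours --- the earliest institution involved in the improvement contradicts the highest-priority-subject-to-feasibility selection rule --- but the scaffolding differs. You avoid the cycle decomposition entirely by (i) first pinning down that any $D$-dominating $\mu'$ must match exactly the lexi-optimal set $W$ (using Lemma~\ref{lemma:RM2} and the fact that institutions prefer any agent to being unmatched), and (ii) running a first-point-of-disagreement argument along $\bar{\pi}$, exhibiting $\mu'$ restricted to the not-yet-processed institutions as an explicit feasibility certificate showing that $\mu'_{d^*}$ was an eligible candidate at $d^*$'s turn. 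This buys you two things the paper's one-paragraph proof glosses over: a justification that the set of matched institutions cannot change under a Pareto improvement (so that a cycle/path decomposition of $\mu\cup\mu'$ reduces to cycles at all), and a verification that the swap is actually feasible for the algorithm at the relevant iteration, which hinges on $\mu'_d=\mu_d$ for all $d$ preceding $d^*$ rather than merely on $d^*$ being highest-ranked in some cycle. The only implicit assumption you share with the paper is that the dominating $\mu'$ ranges over individually rational matchings (otherwise $D$-efficiency is unattainable whenever some institution is unmatched); given that reading, your argument is complete.
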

\begin{proof}
Suppose that the outcome is not Pareto-efficent. Then it admits at least one trading cycle. Among all cycles identify the highest ranked institution $d$. At the point where $d$ is permanently matching with some agent $i$, it could have  been permanently matched with the higher priority agent it points to in the trading cycle which is a contradiction.
Hence, the outcome is Pareto-efficient which also implies fairness. 
\end{proof}

Next, we consider non-bossiness that has been studied in several allocation and matching contexts~\citep{Papa01b,Fuhi10a,Sven99a}.
A mechanism  is \textit{non-bossy} if no unmatched agent can change her acceptable set such that she remains unmatched but the matching changes.   
The  (agent-proposing) Deferred Acceptance mechanism when applied to our setting by using arbitrary tie-breaking in agents' preferences does not satisfy non-bossiness. 
The REV rule of Aziz and Brandl (2024) also violates non-bossiness~\citep{AzBr21a}.\footnote{In REV, it is even possible that an unmatched agent remains unmatched but manages to change its preferences and change the set of matched agents~\citep{AzBr21a}. }

In contrast, we show that our rules satisfy non-bossiness. 

\begin{theorem}
The SAFE/Rank-Maximal Mechanisms are non-bossy. 
\end{theorem}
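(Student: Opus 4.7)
The plan is to use the Rank-Maximal formulation (Mechanism~\ref{mech:RM}) and reduce non-bossiness to the following consistency property: if agent $i$ is unmatched in $\mu=\varphi^{\mbox{\footnotesize RankMax}(\bar\pi)}(A,\succ)$, then running the same mechanism on the acceptability graph $G-i$ obtained by deleting $i$ and all its incident edges yields exactly $\mu$ restricted to $N\setminus\{i\}$. Given this consistency, the theorem is immediate: fix any $A'_i$ with $i$ still unmatched in $\mu'=\varphi^{\mbox{\footnotesize RankMax}(\bar\pi)}((A'_i,A_{-i}),\succ)$, and note that the graphs $G$ and $G'$ differ only in edges incident to $i$, so $G-i=G'-i$. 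Applying consistency on both sides forces $\mu$ and $\mu'$ to agree on $N\setminus\{i\}$, and since both leave $i$ unmatched, $\mu=\mu'$.

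For the consistency property I would track the two phases of Mechanism~\ref{mech:RM} separately. First, I claim the lexi-optimal matchable set of institutions $W$ is the same in $G$ and $G-i$. Since $\mu$ matches $W$ without using $i$, the institution-side of $\mu$ is a matching of $W$ in $G-i$, so the lexi-optimal set for $G-i$ is lexicographically at least as good as $W$. Any lexicographically strictly better matchable set in $G-i$ would, because $G-i\subseteq G$, also be matchable in $G$, contradicting the lexi-optimality of $W$ in $G$. Hence the first-phase outputs coincide.

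For the second phase I would induct over the $\bar\pi$-ordered iterations and show that the algorithm picks the same agent for each institution $c\in W$ when run on $G$ versus on $G-i$. Three observations carry the step: the agent $j$ chosen for $c$ in the $G$-run is not $i$, since $i$ is unmatched in $\mu$; any higher-priority candidate $h\neq i$ rejected in the $G$-run, meaning the remaining institutions cannot all be matched after committing to $(c,h)$, is also rejected in the $(G-i)$-run because non-matchability is monotone under edge deletion; and $j$ is accepted in the $(G-i)$-run because the tail of $\mu$ itself exhibits a valid completion that already avoids $i$. The main obstacle will be this second-phase induction: one must carefully align the priority scans of $c$'s acceptance list in the two graphs, which differ only at the slot of $i$, and recognize that the global hypothesis that $i$ is unmatched in $\mu$ is precisely what supplies a matching witness on the $(G-i)$ side at every step.
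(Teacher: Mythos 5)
Your proof is correct and rests on the same two pillars as the paper's own argument: the lexi-optimal set $W$ is unaffected by an unmatched agent's report, and the second-phase assignments then coincide step by step. The packaging is genuinely different, though. The paper compares the runs at $A_i$ and $A'_i$ directly, first for $A'_i \subset A_i$, then by symmetry for $A_i \subset A'_i$, and finally combines the two cases via the intersection report; you instead prove a single consistency lemma --- deleting an unmatched agent and her edges leaves the outcome on everyone else unchanged --- and then exploit $G-i = G'-i$ to handle an arbitrary $A'_i$ in one stroke. Your route buys two things. First, it removes the case split entirely. Second, it supplies a justification the paper elides: the paper asserts that ``in each step of the algorithm, each institution is matched to the same agent as previously'' without argument, whereas your induction makes this precise via the two observations that rejections transfer from the $G$-run to the $(G-i)$-run because infeasibility of saturating the remaining institutions is monotone under edge deletion, and that the selected agent's acceptance transfers because the tail of $\mu$, which avoids $i$, is a feasibility witness in the smaller graph. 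The only step worth making explicit is that the greedy first phase of the Rank-Maximal algorithm outputs the lexi-optimal matchable set of whichever graph it is run on (Lemma~\ref{lemma:RM1}), so your correct two-sided monotonicity argument that the lexi-optimal sets of $G$ and $G-i$ coincide does indeed imply that the first-phase outputs coincide.
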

\begin{proof}
Assume that agent $i$ is unmatched and reduces her acceptable set of institutions. 
The acceptability graph is the same except for some edges adjacent to $i$ which have been removed. This means that the lexi-optimal set of institutions does not change, as the previous lexi-optimal set of institutions can still be matched to the same agents as previously, whose acceptability edges are the same.  Consequently, in each step of the algorithm, each institution is matched to the same agent as previously. Hence, the matching does not change. Given the symmetry of this argument and the assumption that $i$ remains unmatched after changing her acceptable set, it also follows that if agent $i$ expands her acceptable set of institutions while remaining unmatched, the matching does not change. Finally, a combination of these two arguments (reducing and expanding the acceptable set) implies the result. 
\end{proof}

\section{Conclusion}

In a model with dichotomous preferences of agents and strict priorities of institutions we identify a set of mechanisms, the SAFE/Rank-Maximal mechanisms,  which always maximize the matching size and do not violate the preferences or priorities over the agents. The SAFE mechanisms are sequential and allow institutions to choose their highest-priority agent according to a specific sequence of institutions among all agents who find this institution acceptable. We also show that SAFE mechanisms are strategyproof, that is, agents have no incentive to try to manipulate it by misrepresenting their preferences.  


\bibliographystyle{natbib}


\section*{Appendix}

\subsection*{Proof of Lemma~\ref{L1}}
\begin{proof}
	Let an acceptance list be given for each institution in $D$.  Let a set $\tilde{D}\subseteq D$ of institutions be under-acceptable.  Suppose, by contradiction, that $\tilde{D}$ does not contain any safe block. Then there is no institution in $\tilde{D}$ with exactly one agent on its acceptance list. Thus, each institution in $\tilde{D}$ has at least two agents on its acceptance list. If there are two institutions  with two agents only on their acceptance lists jointly, then these two institutions constitute a safe $2$-block. Thus, each pair of institutions in $\tilde{D}$ is over-acceptable. Assume that $|\tilde{D}| > 2$. Let $k \ge 2$ be such that each set of less than or equal to $k$ institutions is over-acceptable within $\tilde{D}$. Since any set of $k$ institutions is over-acceptable, it follows that any set of $k+1$ institutions is either equal-acceptable or over-acceptable. If there exists a set of $k+1$ institutions which is equal-acceptable then it is a safe block, since all subsets of this set are over-acceptable. As $\tilde{D}$ contains no safe blocks, this is a contradiction.  Thus, each set of $k+1$ institutions is over-acceptable. By induction, $\tilde{D}$ is over-acceptable, which is a contradiction, since it is assumed to be under-acceptable. Therefore, each under-acceptable set of institutions contains a safe block. 
\end{proof}

\subsection*{Proof of Lemma~\ref{L3}}
\begin{proof}
	Let an acceptance list be given for each institution in $\bar{D}$, where $\bar{D} \subseteq D$. Let $D' \subseteq \bar{D} $  be a safe $k$-block in $\bar{D}$. Take any $k-1$ institutions from $D'$, say $D' \setminus \{d\}$, where $d \in D'$. The acceptance lists of institutions in $D' \setminus \{d\}$ contain at most $k$ agents collectively.  Consider the following three cases.
	Collectively there are a) less than $k-1$ agents, b) $k-1$ agents, and c) $k$ agents on the acceptance lists of the $k-1$ institutions in $D' \setminus \{d\}$. We will show that cases a) and b) lead to contradictions, and prove the statement in the lemma for case c).

	In case a) $D' \setminus \{d\}$ is under-acceptable, and thus Lemma~\ref{L1} implies that it contains a safe block. Since a safe block is equal-acceptable, any safe block in $D' \setminus \{d\}$ has to be a strict subset of $D' \setminus \{d\}$, and any safe block in $D' \setminus \{d\}$ is a safe block in $\bar{D}$. This contradicts the fact that $D' $ is a safe block in $\bar{D}$.

	In case b) $D' \setminus \{d\}$ is an equal-acceptable set and hence it is safe block in $\bar{D}$. This contradicts the fact that $D'$ is a safe block in $\bar{D}$.

	In case c), after assigning an agent to $d$ from $d$'s acceptance list and removing this agent from the acceptance lists in $D' \setminus \{d\}$, there are $k-1$ agents remaining collectively on the $k-1$ acceptance lists of institutions in $D' \setminus \{d\}$. Suppose by contradiction that it is not feasible to assign these remaining $k-1$ agents to the $k-1$ institutions in $D' \setminus \{d\}$. Then there exists a subset of $D' \setminus \{d\}$ which is an under-acceptable set, since otherwise the $k-1$ agents would be feasible to assign by Hall’s theorem. Then Lemma~\ref{L1} implies that there is a safe block in $D' \setminus \{d\}$, which is a contradiction since $D'$ is a safe block. Thus, it is feasible to assign the remaining $k-1$ agents to the $k-1$ institutions in $D' \setminus \{d\}$.

	Therefore, since $d$ is an arbitrary institution in $D'$, it is feasible to assign each agent on the acceptance list of at least one institution in $D'$ to an institution in $D'$ that is acceptable to this agent, regardless of the first institution that is assigned an agent from its acceptance list. Given that $D'$ was an arbitrary safe block in $\bar{D}$ where $\bar{D} \subseteq D$, the proof is completed. 
\end{proof}


\subsection*{Proof of Lemma~\ref{L4}}

\begin{proof}
	Let an acceptance list be given for each institution in $\bar{D}$, where $\bar{D} \subseteq D$ such that there is no safe block in $\bar{D}$. Since there is no safe block in $\bar{D}$, there is no equal-acceptable subset of institutions in  $\bar{D}$, and thus Lemma~\ref{L1} implies that each subset of institutions in $\bar{D}$ is over-acceptable. Let $d \in \bar{D}$ and assign an agent to $d$ from $d$'s acceptance list. After removing this agent from the acceptance lists in $D' \setminus \{d\}$, there are at least $k$ agents remaining collectively on the $k-1$ acceptance lists of institutions in $D' \setminus \{d\}$. Moreover, note that each non-empty subset of $D' \setminus \{d\}$ is either equal-acceptable or over-acceptable. Then Hall's theorem implies that it is feasible to assign each institution in $D' \setminus {d}$ an agent other than $d$ on its acceptance list, and the resulting matching is a maximum matching for  $D' \setminus {d}$. This means that, together with the assignment of an agent to $d$ from $d$'s acceptance list, we have a maximum matching for  $D'$.
\end{proof}

\begin{example}[DA is bossy]
	
	Consider the following preferences with the accompanying tie-breaking among the acceptable institutions. 
	
	\begin{align*}
		A_1&: (d_2, d_1) \\
		A_2&: (d_1, d_2)\\
		A_3&: (d_1, d_3)\\
		A_4&: (d_3)
	\end{align*}
	
	The acceptance lists of the institutions are as follows:
	
	\begin{align*}
		d_1&: 1, 3, 2\\
		d_2&: 2, 1\\
		d_3&: 4, 3
	\end{align*}
	
	The agent-optimal matching is  $d_1 - 1$, $d_2 - 2$, $d_3 - 4$
	Note that 3 is unmatched.
	
	If 3 reports $A'_3: (d_3)$ then the agent-optimal matching at $(A'_3, A_{-3})$ is  $d_1 - 2$, $d_2 - 1$, $d_3 - 4$, leaving 3 unmatched. 
	
\end{example}

\end{document}

